\PassOptionsToPackage{unicode}{hyperref}
\PassOptionsToPackage{hyphens}{url}
\PassOptionsToPackage{dvipsnames,svgnames,x11names}{xcolor}
\documentclass[12pt]{article}
\usepackage{tikz}
\usetikzlibrary{shapes.geometric, arrows, positioning, fit, backgrounds, calc, arrows.meta}
\usepackage{amsmath,amssymb,amsfonts,amsthm,bm}

\usepackage{natbib}
\usepackage{authblk}
\usepackage{graphicx,psfrag,epsf}
\usepackage{enumerate}
\usepackage{booktabs}
\usepackage{multirow}

\newcommand{\argmin}{\arg\!\min}

\newcommand{\Var}{\mbox{Var}}

\newcommand{\E}{\mbox{E}}

\newcommand{\ve}[1]{\mbox{\boldmath ${#1}$}}
\newcommand{\vesub}[2]{\mbox{{\boldmath ${#1}$}$_{#2}$}}

\newcommand{\vess}[3]{\mbox{{\boldmath ${#1}$}$_{#2}^{#3}$}}

\newcommand{\hvesub}[2]{\hat{\ve{#1}}_{#2}}

\newcommand{\hvess}[3]{\hat{\ve{#1}}_{#2}^{#3}}

\newtheorem{theorem}{Theorem}
\newtheorem{remark}{Remark}
\newtheorem{lemma}{Lemma}

\newtheorem{assumption}{Assumption}

\usepackage{iftex}
\ifPDFTeX
  \usepackage[T1]{fontenc}
  \usepackage[utf8]{inputenc}
  \usepackage{textcomp} 
\else 
  \usepackage{unicode-math}
  \defaultfontfeatures{Scale=MatchLowercase}
  \defaultfontfeatures[\rmfamily]{Ligatures=TeX,Scale=1}
\fi
\usepackage{lmodern}
\ifPDFTeX\else  
\fi
\IfFileExists{upquote.sty}{\usepackage{upquote}}{}
\IfFileExists{microtype.sty}{
  \usepackage[]{microtype}
  \UseMicrotypeSet[protrusion]{basicmath} 
}{}
\makeatletter
\@ifundefined{KOMAClassName}{
  \IfFileExists{parskip.sty}{%
    \usepackage{parskip}
  }{
    \setlength{\parindent}{0pt}
    \setlength{\parskip}{6pt plus 2pt minus 1pt}}
}{
  \KOMAoptions{parskip=half}}
\makeatother
\usepackage{xcolor}
\makeatletter
\ifx\paragraph\undefined\else
  \let\oldparagraph\paragraph
  \renewcommand{\paragraph}{
    \@ifstar
      \xxxParagraphStar
      \xxxParagraphNoStar
  }
  \newcommand{\xxxParagraphStar}[1]{\oldparagraph*{#1}\mbox{}}
  \newcommand{\xxxParagraphNoStar}[1]{\oldparagraph{#1}\mbox{}}
\fi
\ifx\subparagraph\undefined\else
  \let\oldsubparagraph\subparagraph
  \renewcommand{\subparagraph}{
    \@ifstar
      \xxxSubParagraphStar
      \xxxSubParagraphNoStar
  }
  \newcommand{\xxxSubParagraphStar}[1]{\oldsubparagraph*{#1}\mbox{}}
  \newcommand{\xxxSubParagraphNoStar}[1]{\oldsubparagraph{#1}\mbox{}}
\fi
\makeatother

\usepackage{longtable,booktabs,array}
\usepackage{calc} 
\usepackage{etoolbox}
\makeatletter
\patchcmd\longtable{\par}{\if@noskipsec\mbox{}\fi\par}{}{}
\makeatother
\IfFileExists{footnotehyper.sty}{\usepackage{footnotehyper}}{\usepackage{footnote}}
\makesavenoteenv{longtable}
\usepackage{graphicx}
\makeatletter
\def\maxwidth{\ifdim\Gin@nat@width>\linewidth\linewidth\else\Gin@nat@width\fi}
\def\maxheight{\ifdim\Gin@nat@height>\textheight\textheight\else\Gin@nat@height\fi}
\makeatother
\setkeys{Gin}{width=\maxwidth,height=\maxheight,keepaspectratio}
\makeatletter
\def\fps@figure{htbp}
\makeatother

\makeatletter
\@ifpackageloaded{caption}{}{\usepackage{caption}}
\AtBeginDocument{%
\ifdefined\contentsname
  \renewcommand*\contentsname{Table of contents}
\else
  \newcommand\contentsname{Table of contents}
\fi
\ifdefined\listfigurename
  \renewcommand*\listfigurename{List of Figures}
\else
  \newcommand\listfigurename{List of Figures}
\fi
\ifdefined\listtablename
  \renewcommand*\listtablename{List of Tables}
\else
  \newcommand\listtablename{List of Tables}
\fi
\ifdefined\figurename
  \renewcommand*\figurename{Figure}
\else
  \newcommand\figurename{Figure}
\fi
\ifdefined\tablename
  \renewcommand*\tablename{Table}
\else
  \newcommand\tablename{Table}
\fi
}
\@ifpackageloaded{float}{}{\usepackage{float}}
\floatstyle{ruled}
\@ifundefined{c@chapter}{\newfloat{codelisting}{h}{lop}}{\newfloat{codelisting}{h}{lop}[chapter]}
\floatname{codelisting}{Listing}

\makeatother
\makeatletter
\@ifpackageloaded{caption}{}{\usepackage{caption}}
\@ifpackageloaded{subcaption}{}{\usepackage{subcaption}}
\makeatother

\ifLuaTeX
  \usepackage{selnolig}  
\fi
\usepackage[]{natbib}
\usepackage{bookmark}

\IfFileExists{xurl.sty}{\usepackage{xurl}}{} 
\hypersetup{
  pdftitle={Stable Multivariate Functional Time Series Prediction for Major Geomagnetic Indices},
  pdfauthor={Yian Yu; Shasha Zou; Tuija Pulkkinen; Yang Chen},
  pdfkeywords={Overlapping rolling-window; conformal prediction; Geomagnetic indices forecasting},
  colorlinks=true,
  linkcolor={blue},
  filecolor={Maroon},
  citecolor={Blue},
  urlcolor={Blue},
  pdfcreator={LaTeX via pandoc}}

\newcommand{\anon}{1}

\begin{document}

\def\spacingset#1{\renewcommand{\baselinestretch}%
{#1}\small\normalsize} \spacingset{1}


\if1\anon
{
  \title{\bf Stable Multivariate Functional Time Series Prediction for Major Geomagnetic Indices}
\author[1]{Yian Yu}
\author[2]{Shasha Zou}
\author[2]{Tuija Pulkkinen}
\author[1]{Yang Chen\thanks{Corresponding author: ychenang@umich.edu}}
  \affil[1]{Department of Statistics, University of Michigan, Ann Arbor, MI, USA}
\affil[2]{Department of Climate and Space Sciences and Engineering, University of Michigan, Ann Arbor, MI, USA}
  \maketitle
} \fi

\if0\anon
{
  \bigskip
  \bigskip
  \bigskip
  \begin{center}
    {\LARGE\bf Stable Multivariate Functional Time Series\\[0.3em] Prediction for Major Geomagnetic Indices}
\end{center}
  \medskip
} \fi

\bigskip
\begin{abstract}
High-resolution scientific data, such as geomagnetic index streams, often exhibit complex temporal dependencies that can be modeled through functional data analysis. Conventional functional time series (FTS) methods typically partition continuous processes into non-overlapping segments, which artificially fragments temporal continuity and can limit estimation efficiency and stability. This is particularly evident in geomagnetic time series prediction due to their noisy, sudden, large-scale changes. This study presents a robust multivariate FTS forecasting framework for multi-dimensional time series with inter-series correlations and the existence of exogenous predictors. We introduce an overlapping rolling-window scheme that preserves temporal coherence and mitigates boundary information loss, thereby enriching the effective sample size for a more efficient and stable estimation. We integrate functional principal component analysis for dimension reduction with a vector autoregressive model with exogenous inputs to capture latent dynamics across correlated series. We also construct computationally efficient conformal prediction intervals for uncertainty quantification. The framework is motivated by and applied to the simultaneous forecasting of five critical geomagnetic indices, Kp, Dst, SYM-H, SME, SMR, using solar wind parameters as predictors. Empirical results show that this approach outperforms state-of-the-art machine learning baselines, extends forecast horizons to 6–24 hours, and provides calibrated uncertainty bounds.
\end{abstract}

\noindent%
{\it Keywords:} Overlapping rolling-window; Conformal prediction; Geomagnetic indices forecasting
\vfill

\newpage
\spacingset{1.45} 

\section{Introduction}\label{sec::Intro}
Advances in sensor technology and automated data acquisition have facilitated the collection of high-resolution observations in many scientific domains. The field of space weather forecasting has witnessed a strong boom of physics-informed data-driven approaches over the past two decades due to the availability of large volumes of streaming satellite data collected at high spatial and temporal resolutions \citep{Camporeale2019MLSW,chen2024solar}. With constant observations of solar activity and near-Earth magnetosphere, ionosphere, and thermosphere conditions, scientists are faced with a gigantic volume of data that can potentially improve forecasting accuracy, lead time, and reliability for many different space weather events. Among them, geomagnetic activity (created by solar and solar wind variations, such as solar storm) is monitored through multiple indices that summarize different aspects of the coupled magnetosphere--ionosphere system and are recorded continuously over time, which are closely associated with risks to power grids, satellite navigation and communication system, and satellites electronics \citep{sokolova2021geomagnetic,xue2024space}. Together with near-Earth solar-wind measurements, these observations form a collection of strongly dependent and heterogeneous time series with distinct temporal resolutions, dynamic ranges, and storm-time variability, as illustrated in Figure \ref{fig:IndicesFeatures}.  The figure shows strong temporal dependence within individual series, including persistence, abrupt storm-time transitions, extended recovery behavior, and cross-variable co-variation during active intervals. It also illustrates substantial heterogeneity across variables in scale, smoothness, intermittency, and dynamic range: some series evolve gradually, whereas others are characterized by sharp bursts or highly irregular fluctuations. 

Although several variables exhibit similar patterns during periods of disturbance, forecasting remains difficult because only information available up to the forecast time can be used. The model must therefore learn lagged and evolving dependencies rather than rely on contemporaneous alignment, which is visible only in retrospect. Existing data-driven studies on geomagnetic index forecasting frame the time series prediction problem as a classical machine learning task and are largely designed for individual index prediction, with forecasting skill often limited to short lead times of roughly 1--3 hours \citep{chandorkar2017probabilistic,laperre2020dynamic,park2021operational,Iong2020SYMH,hu2023multi,iong2024sparse,chen2025geodgp}. Despite promising root mean square error (RMSE) results, such works either fail to capture longer-range temporal dependency or are computationally expensive given the temporal dependencies; thus, this inaccurate characterization of the data results in less reliable out-of-sample forecasts. These limitations motivate a functional perspective, in which short temporal segments are treated as curves, and their sequential evolution is modeled over time.

\begin{figure}[ht]
    \centering
\includegraphics[width=1\linewidth]{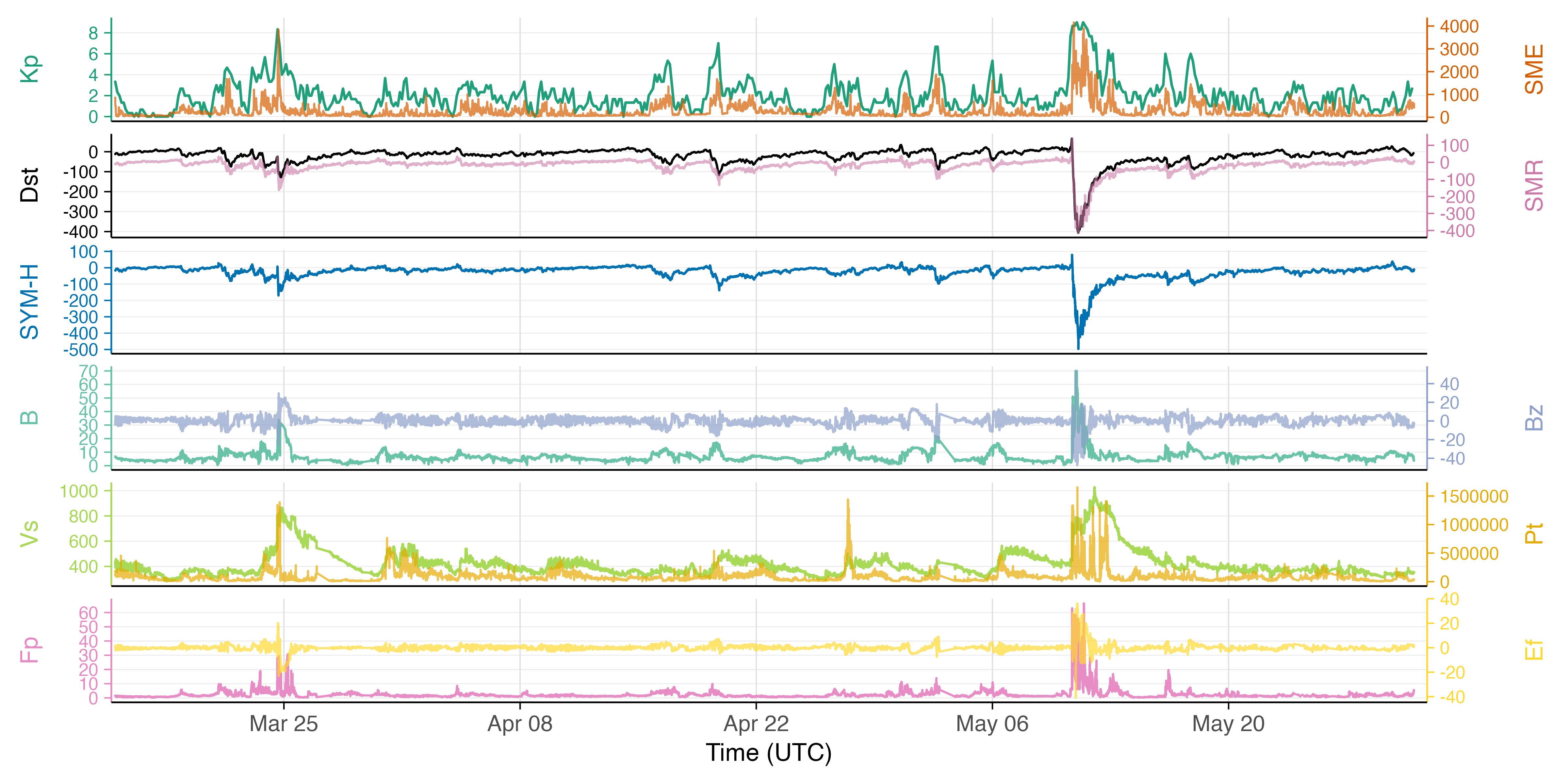}
    \caption{Representative time series of the geomagnetic indices Kp, Dst, SYM-H, SME, and SMR together with the near-Earth solar-wind and interplanetary magnetic field (IMF) variables B, Bz, Vs, Pt, Fp, and Ef over a selected interval in 2024.}
    \label{fig:IndicesFeatures}
\end{figure}

\subsection{Functional Time Series: Brief Literature}
Functional data analysis (FDA) treats individual observations as realizations of smooth functions, such as curves, images, or surfaces, rather than finite-dimensional vectors \citep{ramsay2005functional}. Typical examples include growth curves \citep{Erbas2007AgeFDA}, environment pollution concentration \citep{Aue2015SFTS}, electricity consumption \citep{Antoniadis2006FWK,shang2013functional}, and movement trajectories \citep{Yu202Baye,gertheiss2024functional}. 
Functional time series (FTS) arise when such functional observations are collected sequentially over time and are modeled as a stochastic process. 

\citet{hyndman2007robust} introduced a foundational FTS forecasting framework via functional principal component analysis (FPCA), where principal component (PC) scores are extracted from curves, forecasted in a lower-dimensional space, and then used to reconstruct future trajectories. This framework has motivated extensions such as robust variants for handling outliers \citep{shang2019robust} and functional ARMAX models that incorporate moving-average terms \citep{gonzalez2017forecasting}. Other decomposition-based methodologies include singular spectrum analysis (SSA), adapted to functional \citep{Haghbin2021FSSA,Trinka2023FSSA} and multivariate FTS contexts \citep{trinka2022multivariate}, alongside similarity-based methods employing $k$-nearest neighbors \citep{Antoniadis2006FKNN} or pattern sequence matching \citep{Bokde2017FPSF}. Nonetheless, scalable multivariate FTS with exogenous functional covariates remains challenging, especially when the observed series are noisy, heterogeneous, and strongly dependent. 

\subsection{Our Contribution: OLFTSA}
We propose a new multivariate FTS forecasting framework with exogenous regressors for high-resolution, heterogeneous, and strongly dependent geomagnetic index data observed together with solar-wind drivers. Our approach extends \citet{hyndman2007robust} by applying FPCA separately to each smoothed functional variable, rotating the stacked score matrix via orthogonal varimax for interpretability and parsimony \citet{kaiser1958varimax}, and model the resulting scores jointly through a vector autoregressive framework with exogenous inputs. This yields an estimation and forecasting procedure that is robust to heterogeneous resolutions, varied noise levels, and inconsistent fluctuation magnitudes across series. We refer to the proposed method as \textbf{O}ver\textbf{l}apping \textbf{F}unctional \textbf{T}ime \textbf{S}eries \textbf{A}nalysis with conformal prediction, denoted as \textbf{OLFTSA}, throughout the rest of the paper. 

Another distinctive feature of OLFTSA is the use of overlapping rolling windows to construct functional segments from the original time series; see Figure \ref{fig:model}. In contrast to conventional FTS approaches that partition continuous-time data into non-overlapping segments (e.g., by day, week, or year), this design preserves temporal continuity across adjacent segments, avoid losing transitional behavior at window boundaries, and enriches the effective sample size. Simulation studies in Section \ref{sec::Sim} show improved forecasting performance relative to non-overlapping constructions. This overlapping structure also facilitates the uncertainty quantification procedure developed in Section \ref{sec:UQ}, providing distribution-free validity under modest exchangeability conditions with tractable computation. We apply OLFTSA to the joint forecasting of multiple geomagnetic indices using solar-wind drivers and show that it delivers competitive 6--24 hour forecasts in real-data experiments, extending beyond the roughly 3-hour horizons typical of most existing single-index models constrained by the decay and complexity of solar wind forcing \citep{Khabarova2024}.

The rest of the manuscript is organized as follows. Section \ref{sec::Data} gives the construction of functional data with geomagnetic indices as responses and solar-wind parameters as predictors. Section \ref{sec:Method} presents the proposed model. Section \ref{sec::FTS} outlines the FTS methodological foundations.  Section \ref{sec::Sim} and Section \ref{sec::App} present simulations studies and real-data applications with benchmarks, respectively. Section \ref{sec:Disc} is the conclusion. Additional tables, figures, and theoretical proofs are provided in the Online Supplementary Materials.

\section{Geomagnetic Index Prediction Data}\label{sec::Data}

This study considers the joint forecasting of multiple geomagnetic indices using near-Earth solar wind and interplanetary magnetic field (IMF) measurements as predictors. The target indices, Kp, Dst, SYM-H, SME, and SMR, are crucial to space weather operations, serving as proxies for distinct magnetospheric current systems across varying temporal scales. Despite sharing a common primary driver in solar wind-magnetosphere interaction \citep{Newell2007SP}, these indices exhibit substantial heterogeneity in both sampling cadence and dynamic range. 
Consequently, 
studies in literature typically treat them as separate tasks \citep{Camporeale2019MLSW}. These indices exhibit non-negligible within-index and cross-index correlations, as shown in Figure \ref{fig:LagCorr_target} in the Online Supplementary Materials, which motivates a joint model. 

The target variables are five widely used geomagnetic indices: Kp from GFZ Potsdam (\url{https://kp.gfz.de/}), Dst and SYM-H from WDC Kyoto (\url{https://wdc.kugi.kyoto-u.ac.jp/}), and SME and SMR from SuperMAG (\url{https://supermag.jhuapl.edu/}). Together, these indices characterize complementary aspects of geomagnetic activity across different temporal scales and current systems. Specifically, Kp is a planetary index of global geomagnetic activity derived from mid-latitude ground magnetometers and reported every 3 hours. Dst is the disturbance storm-time index, which measures the intensity of the ring current using low-latitude magnetic observations at hourly cadence. SYM-H is a 1-minute high-temporal-resolution analogue of Dst, allowing finer characterization of rapid storm-time variations. SME is a high-latitude auroral electrojet index that quantifies auroral-zone horizontal ionospheric current activity and is particularly informative for substorm dynamics. SMR is a SuperMAG ring-current index analogous to Dst and SYM-H but constructed from a broader station network, providing improved spatial coverage.

The response set spans 1995--2025 and combines variables observed at heterogeneous cadences, ranging from 3-hourly Kp to hourly Dst and minute-level SYM-H, SME, and SMR. These indices also differ substantially in scale and dynamical behavior; for example, Kp ranges from 0 to 9, whereas SME may exceed 6000 nT during major geomagnetic disturbances. The high-frequency variables also exhibit short-scale variability and occasional redundancy at fine temporal resolution, motivating preprocessing before functional reconstruction. To reduce noise and eliminate near-duplicate adjacent observations at very fine resolution, we computed 5-minute averages for SYM-H, SME, and SMR \citep{Iong2020SYMH}.

The predictor set comprises six solar wind and IMF variables commonly used in geomagnetic index forecasting. 
Emphasis is placed on the southward IMF Bz, which governs dayside magnetic reconnection and energy input into the magnetosphere. Additionally, derived quantities such as the plasma flow dynamic pressure and solar wind motional electric field are included to capture known nonlinearities in solar wind magnetosphere coupling. More precisely, the predictor consists of solar wind and IMF parameters from NASA's OMNIWeb database (\url{https://omniweb.gsfc.nasa.gov/}), which is widely used in geomagnetic forecasting studies; see, for example, \citet{angeo2008GeoIndexOmni,Ayala2016SW,Wang2023AML,Iong2020SYMH,Nair2023SW,polozov2023sme}. In this study, we focus on a subset of predictors selected by examining cross-covariance with the target indices at multiple time lags; see Figure \ref{fig:LagCorr} in the Online Supplementary Materials. The selected variables are total IMF magnitude (B in nT), IMF component in the north-south direction in Geocentric Solar Magnetospheric (GSM) coordinates (Bz in nT; critical for magnetic reconnection), plasma flow speed (Vs in Km/s), proton temperature (Pt in K), flow dynamic pressure (Fp in nPa), and motional electric field (Ef in mV/m).

The OMNI predictors are available over 1995--2025 at high temporal resolution, including 1-minute products, but contain occasional missing values and are not directly aligned with the heterogeneous cadences of the response variables. To standardize the temporal grid and reduce preprocessing complexity, we linearly interpolated isolated missing values and resampled all OMNI predictors to 1-hour resolution. A summary table of the response and predictor variables is shown in Table \ref{tab:summary_var} in the Online Supplementary Materials.

\section{Model Architecture and Validation}
\label{sec:Method}

\subsection{Functional PCA (FPCA)}\label{sec::FTS}

We assume that the observations arise from an underlying smooth FTS, $y_{d}(t_i)=\mathrm{y}_d(t_i)+\varepsilon_{d,i},$ for $i=1,\cdots, n,\, d=1,\cdots, D$, where $\{t_i\}_{i=1}^n \subset \mathcal{T}$ are sampling points where the trajectories are measured, and $\varepsilon_{d,i}$ are independent and identically distributed (i.i.d.) measurement errors with mean zero and variance $\sigma_{\varepsilon}^2$. For discretely observed trajectories contaminated by noise, smoothing is typically used to recover the latent functional structure before estimating the mean and covariance functions. For densely observed trajectories, one may presmoothing each curve using spline or local kernel methods; for sparse designs, the mean and covariance can be instead estimated by pooling information across subjects \citep{Yao01062005}. After smoothing, we still denote the reconstructed trajectories by $\{ \mathrm{y}_d \}_{d=1}^D$ for notational simplicity. The mean function and covariance kernel are estimated by $$\hat{\mu}(t)=\frac{1}{D}\sum_{d=1}^D \mathrm{y}_d(t),\, t\in \mathcal{T},\text{ and } \hat{\mathcal{C}}(t,s)=\frac{1}{D}\sum_{d=1}^D\left( \mathrm{y}_d(t)-\hat{\mu}(t)\right)\left( \mathrm{y}_d(s)-\hat{\mu}(s)\right).$$ 
Under weak stationarity, the lag-$0$ autocovariance operator $\mathcal{C}$ associated with $\{\mathrm{y}_d, d\in \mathbb{N}\}$ is a nonnegative, self-adjoint Hilbert-Schmidt operator on $\mathcal{H}$. By Mercer's theorem, it admits the spectral decomposition $\mathcal{C}(\cdot)=\sum_{k=1}^\infty \lambda_k\langle \phi_k,\cdot\rangle \phi_k$, where $\{\lambda_k, k\in \mathbb{N}\}$ is a nonincreasing sequence of eigenvalues and $\{\phi_k, k\in\mathbb{N}\}$ are the corresponding orthonormal eigenfunctions satisfying $\mathcal{C}(\phi_k)=\lambda_k\phi_k$ and $\| \phi_k\| =1$ \citep{Ash1975SP,hsing2015theoretical}. This yields the Karhunen-Lov{\'e}ve expansion \citep{gohberg2012basic} $\mathrm{y}_d(t)=\mu(t)+\sum_{k=1}^\infty \xi_{d,k}\phi_k(t),$ $\xi_{d,k}:=\langle\mathrm{y}_d-\mu, \phi_k\rangle,$ $t\in \mathcal{T}$, where $\mu$ is the mean function. Scores $\xi_{d,k}$ are uncorrelated random variables with zero mean and variance $\lambda_k$. Truncating the expansion at $K$ terms yields a finite-dimensional approximation in which the leading components often capture most of the variation in the sample paths \citep{ramsay2005functional,horvath2012inference}. Unlike spline, Fourier, or wavelet expansions, FPCA uses the eigenfunctions of the covariance operator as a data-adaptive basis, which achieves effective dimension reduction with relatively few components \citep{horvath2012inference}.
The eigenvalues $\{\hat{\lambda}_{k}\}_{k=1}^K$ and eigenfunctions $\{\hat{\phi}_k\}_{k=1}^K$ are obtained from the spectral decomposition of $\hat{\mathcal{C}}$. Under standard regularity assumptions, they are consistent at rate $D^{-1/2}$ \citep{bosq2000linear}. The measurement error variance is estimated by $$
\hat{\sigma}_{\varepsilon}^2=\frac{1}{D}\sum_{d=1}^D\frac{1}{n}\sum_{i=1}^n \left[ \left(\mathrm{y}_d(t_i)-\frac{1}{n}\sum_{i=1}^n\mathrm{y}_d(t_i) \right)^2-\hat{\mathcal{C}}(t_i,t_i) \right].$$

Choosing $K$ remains application dependent. Common approaches include scree plots and the explained cumulative proportion of variance \citep{jolliffe2002principal}. Implementations are available in the R packages \texttt{fda} and \texttt{refund} \citep{ RamsaySilverman2005, Goldsmith2013rfund}.

\subsection{OLFTSA Model Description}
Building on the functional representation of geomagnetic indices and solar wind drivers, we propose a framework that combines patch-wise functional preprocessing with an overlapping rolling-window scheme, FPCA-based dimension reduction, and multivariate time series forecasting with exogenous inputs to produce multi-step forecasts of future index curves (patches), as summarized in Figure \ref{fig:model}. The framework is designed to accommodate heterogeneous sampling cadences and fluctuation magnitudes across variables by operating on smoothed functional patches and forecasting in a low-dimensional score space.

Let $\{y_{p}(t_i),i \in \mathbb{N}\}$ denote the raw observation stream for variable $p$, where $p\in \mathcal{P}\cup \mathcal{Q}$, $\mathcal{P}= \{\text{kp},\text{Dst},\text{SYM-H},\text{SME},\text{SMR}\}$ denotes the set of target variables, and $\mathcal{Q}=\{\text{B},\text{Bz}, \text{Vs}, \text{Pt}, \text{Fp},\text{Ef}\}$ is the set of input solar wind plasma features. The observations are reshaped into fixed-duration patches, such as 48- or 72-hour windows, indexed by $d=1,\cdots, D$. Denote the $d$-th patch for variable $p$ by $y_{p,d}(t_{p,1}),\cdots, y_{p,d}(t_{p,n_p})$, where $\{t_{p,i}\}_{i=1}^{n_p} \subset \mathcal{T}$ is the within-patch time grid, which may differ across variables. Empirical autocorrelation decays substantially within about one day across variables, see Figure \ref{fig:LagCorr_target} in the Online Supplementary Materials, this motivates the use of overlapping rolling windows to preserve short-range dependence across adjacent patches.

\begin{figure}[hpbt!]
  \centering
\resizebox{\textwidth}{!}{\input{Model-figure}}
\caption{Schematic illustration of the proposed OLFTSA framework. The embedded plots use Kp as an illustrative example to explain the overlapping rolling-window construction and the associated FPCA quantities. $\mathcal{P}$ is the set of target geometric indices and $\mathcal{Q}$ is the set of input solar-wind and IMF features; specifically $\mathcal{P}=\{\text{Kp},\text{Dst},\text{SYM-H},\text{SME},\text{SMR}\}$ and $\mathcal{Q}=\{\text{B},\text{Bz},\text{Vs},\text{Pt},\text{Fp},\text{Ef}\}$. FPCA denotes functional principal components analysis, and MTSF denotes multivariate time series forecasting.}
  \label{fig:model}
\end{figure}

Each patch is treated as a discretely observed realization of an underlying smooth function on $\mathcal{T}$, and smoothing, for example with penalized B-splines, is used to obtain a functional estimate $\mathrm{y}_{p,d}(\cdot)$. Applying the FPCA construction in Section \ref{sec::FTS} separately to each variable $p$ yields eigenfunctions $\{\phi_{p,k}(\cdot)\}_{k \geq 1}$ and scores $\{ \xi_{p,d,k}\}_{k \geq 1}$. Truncation to $K_p$ components gives a finite-dimensional representation, where $K_p$ is chosen by a pre-specified proportion of explained variance, for example 0.99, with $K_p \leq n_p$. 

Let $\vesub{\xi}{p,d}=\left( \xi_{p,d,1},\cdots, \xi_{p,d,K_p}\right)^\top$ denote the functional PC score vector for variable $p$ at patch $d$, for $d=1,\cdots, D$. Stacking scores across targets $\mathcal{P}$ and predictors $\mathcal{Q}$ yields the multivariate score time series $\{ \vesub{\xi}{\mathcal{P}\cup\mathcal{Q},d}\}_{d=1}^D$. A multivariate time series forecasting model with exogenous inputs is then fitted to capture temporal dependence and cross-variable interactions, and is used to predict the next-patch scores $\hat{\ve{\xi}}_{p,d+1}$ for $p\in\mathcal{P}$. More generally, we write $\hat{\ve{\xi}}_{p,d+1}=f\!\left(\ve{\xi}_{\mathcal{P}\cup\mathcal{Q},d},\ldots,\ve{\xi}_{\mathcal{P}\cup\mathcal{Q},d-h}\right),$ where $f$ is a forecasting map and $h$ is the lag order. In the empirical analysis, lagged correlation decays rapidly, so we use only the most recent patch, $\hat{\ve{\xi}}_{p,d+1}=f\left(\vesub{\xi}{\mathcal{P}\cup\mathcal{Q},d} \right),\, p \in \mathcal{P}.$


The forecasted score vectors are mapped back to the functional domain by back-projection onto the retained eigenfunctions and then evaluated on the original observation grid. Although the functional observations are constructed from overlapping windows, this affect primarily the temporal dependence among curves rather than the FPCA machinery itself. Under standard regularity conditions, including consistent smoothing and weak dependence of the induced FTS, the estimators of the mean function, covariance kernel, and leading eigencomponents are consistent, see, for example \citet{bosq2000linear,horvath2012inference,Hormann2010WeekdepFD}.

\subsection{Uncertainty Quantification for OLFTSA}\label{sec:UQ}
We derive the uncertainty quantification procedure for  OLFTSA, adapting the sequential conformal prediction framework of \citet{xu2024CPMTS-ellipsoidal}. 
Recall that we constructed FTS using an overlapping rolling-window scheme, as illustrated in Figure \ref{fig:model}. 
For notational simplicity, we suppress the subscript $p$ and define the actual forecast vector for $d$-th patch as $\hvesub{y}{d}=\left(\hat{y}_{d}(t_{n^\prime+1}), \cdots, \hat{y}_{d}(t_n) \right)^\top$, and the corresponding true values as $\vesub{y}{d}=\left(y_{d}(t_{n^\prime+1}), \cdots, y_{d}(t_n) \right)^\top$.
We aim to construct a prediction region $\mathcal{C}_{d}^\alpha$ for each patch $d \ge D+1$, conditional on past observations and the current timestamps $\vesub{t}{d}=\ve{t}=\{t_i\}_{i=n^\prime +1}^n$, for a specified significance level $\alpha \in [0,1]$, with coverage guarantees: a weaker marginal coverage and/or a a stronger conditional coverage \citep{gibbs2025conformal}. 
Under coverage guarantees, regions with small volume provide more informative uncertainty quantification. 

Assume that the forecasting model is trained. We use the first $D$ patches as a calibration set, we compute residual vectors, $\hvesub{\varepsilon}{d} = \vesub{y}{d} - \hvesub{y}{d} \in \Re^{n-n'},$ $d=1,\cdots, D$, and their empirical mean $\bar{\ve{\varepsilon}} = \frac{1}{D}\sum_{d=1}^D \hvesub{\varepsilon}{d}$. 
Let the sample covariance be $\tilde{\Sigma}_D = \frac{1}{D}\sum_{d=1}^{D}(\hvesub{\varepsilon}{d}-\bar{\ve{\varepsilon}})(\hvesub{\varepsilon}{d}-\bar{\ve{\varepsilon}})^\top$. In moderate- to high-dimensional settings, $\tilde{\Sigma}_D$ can be ill-conditioned. We apply entrywise thresholding to encourage sparsity \citep{KIM2023Cov}: $\hat{\Sigma}_D = \left[ \tilde{\sigma}_{j,k}  \mathbb{I}(|\tilde{\sigma}_{j,k}| > \tau) \right]_{1 \leq j,k \leq n-n'},$ where $\tilde{\sigma}_{j,k}$ are the elements of $\tilde{\Sigma}_D$ and $\tau > 0$ is a tuning parameter. We use a Mahalanobis-type quadratic form as the nonconformity score. For any realization $\ve{y}$ for patch $d$, define $$
    \hat{e}_d(\ve{y})=\left( \ve{y}-\hvesub{y}{d}-\bar{\ve{\varepsilon}} \right)^\top \hat{\Sigma}_D^{-1} \left(\ve{y}-\hvesub{y}{d}- \bar{\ve{\varepsilon}} \right) \in \Re.$$ In particular, the realized calibration scores are $\hat{e}_d:= \hat{e}_d(\vesub{y}{d})$ for $d=1,\cdots, D$, and we denote the score sequence by $\mathcal{E}_D=\{\hat{e}_d\}_{d=1}^D$. Because of the inherent dependency among the original data $\left\{\vesub{y}{d} \right\}_{d=1}^D$, the scores in $\mathcal{E}_D$ are not exchangeable. Instead of using the empirical $(1-\alpha)$-quantile of $\mathcal{E}_D$, we estimate an adaptive quantile of the score process using the sequential conformal quantile regression approach of \citet{xu2024CPMTS-ellipsoidal}. See also the quantile random forests construction in \citet{Xu23SPCI}. Concretely, let $\hat{Q}_d(\alpha)$ denote the fitted estimator of the $\alpha$-quantile of $\hat{e}_d$, optionally conditional on predictors such as $\vesub{t}{d}$ or other available covariates. We then define the ellipsoidal prediction region for patch $d \ge D+1$ as $\mathcal{C}_d^\alpha= \left\{\ve{y}: \hat{Q}_d(\hat{\beta})\leq \hat{e}(\ve{y}) \leq  \hat{Q}_d(1-\alpha + \hat{\beta}) \right\},$ where
    \begin{align*}         \hat{\beta}=&\argmin_{\beta \in [0,\alpha]} \left\{ \hat{Q}_d(1-\alpha+\beta)^\top \hat{\Sigma}_D^{-1} \hat{Q}_d(1-\alpha+\beta) - \hat{Q}_d(\beta)^\top \hat{\Sigma}_D^{-1} \hat{Q}_d(\beta)  \right\}.
    \end{align*} The resulting set is an ellipsoid centered at $\hvesub{y}{d} + \bar{\ve{\varepsilon}}$ with shape determined by $\hat{\Sigma}_D^{-1}$. The full sequential calibration algorithm is given in \citet{xu2024CPMTS-ellipsoidal}. 

We now establish an asymptotic conditional coverage bound for the proposed conformal prediction region under dependent forecast errors. 
Let the true noise vector be $\vesub{\varepsilon}{d}$ with covariance matrix $\Sigma$, and define the theoretical score ${e}_{d}=\vess{\varepsilon}{d}{T}{\Sigma}^{-1}\vesub{\varepsilon}{d}$. Relaxing the standard assumption of $\{\vesub{\varepsilon}{d}\}_{d=1}^{D+1}$ being i.i.d., we assume that the sequence is stationary and strong mixing. Under Assumptions 1-5, we establish asymptotic conditional coverage validity of our procedure, accounting for estimation error in the noise covariance matrix; see Section \ref{Appen:ProofThem1} in the Online Supplementary Materials for details.

\begin{assumption}[Stationarity and Strong Mixing]\label{Assum:error_noniid}
The error sequence $\{\vesub{\varepsilon}{d}\}_{d=1}^{D+1}$ is stationary and strong mixing (or $\alpha$-mixing) with mixing coefficients satisfying $\sum_{l>0} \alpha_l < M$ for some constant $M<\infty$, where $\alpha_l = \sup_{d \in \mathbb{N}} \sup_{\mathcal{A} \in \mathcal{F}_1^{d}, \mathcal{B} \in \mathcal{F}_{d+l}^{\infty}} |P(\mathcal{A} \cap \mathcal{B}) - P(\mathcal{A})P(\mathcal{B})|$ and $\mathcal{F}_a^{b}$ denotes the $\sigma$-algebra generated by $\{\vesub{\varepsilon}{a}, \cdots, \vesub{\varepsilon}{b}\}$.
\end{assumption}
Stationarity implies that for any $j \ge 1$, integers $d_1, \cdots, d_j$, and lag $l$, the joint distribution of $(\vesub{\varepsilon}{d_1}, \cdots, \vesub{\varepsilon}{d_j})$ is identical to that of $(\vesub{\varepsilon}{d_1+l}, \cdots, \vesub{\varepsilon}{d_j+l})$. The strong mixing condition requires that the coefficients $\alpha_l$ converge to zero as $l \to \infty$.

\begin{assumption}[Lipschitz Continuity]\label{Assum:CDFLip}
The cumulative distribution function of the true nonconformity score, denoted by $F_e(x)=P(e<x)$, is Lipschitz continuous with constant $L_{D+1} >0$, that is, $F_e(x_1)-F_e(x_2) \leq L_{D+1}|x_1-x_2|$ for all $x_1, x_2 \in \Re$.
\end{assumption}

\begin{assumption}[Prediction Error Bound]\label{Assum:error_sum}
    There exists a sequence $\{\delta_D\}_{D\geq 1}$ such that $\frac{1}{D}\sum_{d=1}^D \|\hvesub{\varepsilon}{d}-\vesub{\varepsilon}{d} \|^2 \leq \delta_D^2$ and $\|\hvesub{\varepsilon}{D+1}-\vesub{\varepsilon}{D+1} \|\leq \delta_{D}$.
\end{assumption}
This assumption bounds the aggregate prediction error. For consistent estimators, $\delta_D$ typically converges to zero as $D \to \infty$.

\begin{assumption}[Covariance Regularity]\label{Assum:bound_cov}
There exists a constant $\lambda > 0$ such that $\|\Sigma\| \ge \lambda$ and $\|\hat{\Sigma}_D\| \ge \lambda$, where $\|\cdot\|$ denotes the spectral norm.
\end{assumption}
This condition ensures that both the true and estimated covariance matrices remain strictly positive definite, preventing singularity issues in the score computation.

\begin{assumption}[Weak Dependence and Moment Conditions]\label{Assum:weakdepen&tail}
The sequence satisfies a short-range weak dependence condition. Specifically, for some $\omega > 0$, $$\Theta_{j,\omega}=\max_{n-n^\prime+1 \leq i \leq n} \sum_{d=j}^{\infty}\left(\E |\hat{\varepsilon}_{d,i}-{\varepsilon}_{d,i}|^{\omega} \right)^{1/\omega} < \infty.$$ Assume there exist $\omega > 4$ and positive constants $c_1$ and $c_2$ such that $\Theta_{j,\omega} \leq c_2 j^{-c_1}$ for all $j \ge 1$. For tail behavior, assume there exist positive constants $c_3, c_4, c_5$ such that $\max_{n-n^\prime+1 \leq i \leq n} \E |\varepsilon_{d,i}|^\omega < c_3$, $\max_{1 \leq d \leq D+1} |\vesub{\varepsilon}{d}| \leq \sqrt{c_4 \omega}$ almost surely, and $\text{Var}[|\vesub{\varepsilon}{d}|^2] \leq c_5 \omega$.
\end{assumption}

\begin{theorem}\label{Thm:coverage}
Under Assumptions \ref{Assum:error_noniid}--\ref{Assum:weakdepen&tail}, for any training sample size $D$ and confidence level $\alpha \in (0,1)$, the conditional coverage of uncertainty interval $\mathcal{C}_{D+1}^\alpha$ satisfies\begin{align*}
    &\left| P\left(\left. \vesub{y}{D+1} \in \mathcal{C}_{D+1}^\alpha \right| \vesub{t}{D+1}\right)-(1-\alpha) \right|\\
    \leq &\frac{12
    M^{1/3}(\log D)^{2/3}}{(2D)^{1/3}}+4\left(L_{D+1}G(D,\tau)+2L_{D+1}+1 \right)G(D,\tau),
\end{align*} where 
\begin{equation}
\begin{aligned}
G(D,\tau)&=\left\{\frac{\delta_D^2}{\lambda}+ \frac{C_1}{\lambda^2\delta}(c_4\omega + \sqrt{\frac{c_5 \omega}{D\delta}})A(D,\tau) \right\}^{1/2}, \\
 A(D,\tau)&=\Bigg( \sum_{j,k=1}^{n-n^\prime} \min(\tau^2, \sigma_{jk}^2) \Bigg)+ (n-n^\prime)^2 \min \Bigg( \frac{1}{D}, \frac{\tau^{2-\omega/2}}{D^{\omega/4}} , H(\tau) \Bigg),\end{aligned} \label{Eq:G(D,tau)} \end{equation}
 and 
 \begin{align*}
    H(\tau)
    &=\begin{cases} \frac{\tau^{2-\omega/2}}{D^{\omega/2-1}}+C_2\left(\frac{1}{D}+\tau^2\right)e^{-D\tau^2},\, &\text{if}\ c_2 > 1/2-2/\omega,\\
    \frac{\tau^{2-\omega/2}(\log D)^{1+\omega/2}}{D^{\omega/2-1}}+C_2\left(\frac{(\log D)^2}{D}+\tau^2\right)e^{-\frac{D\tau^2}{(\log D)^{2}}},\, &\text{if}\  c_2 = 1/2-2/\omega,\\
        \frac{\tau^{2-\omega/2}}{D^{\omega/2(c_2+1/2)}}+C_2\left(\frac{1}{D^{\tilde{c}_2}}+\tau^2\right)e^{-D^{\tilde{c}_{2}}\tau^2},\, &\text{if}\  c_2 < 1/2-2/\omega,
    \end{cases}
\end{align*} with $\tilde{c}_2=(3+c_2\omega)/(1+\omega/2)$. Here, $C_1$ and $C_2$ are constants independent of $\tau$, $D$, and $\omega$.
\end{theorem}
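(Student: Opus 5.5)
The argument follows the template of \citet{xu2024CPMTS-ellipsoidal} and \citet{Xu23SPCI}: compare the estimated score process with the ``oracle'' score process and bound the discrepancy. Write $e_d=\vess{\varepsilon}{d}{T}\Sigma^{-1}\vesub{\varepsilon}{d}$ for the oracle scores and $\hat e_d$ for the computed ones. Let $\hat F_{D}(x)=\frac1D\sum_{d=1}^D\mathbb{I}(\hat e_d\le x)$ and $\tilde F_D(x)=\frac1D\sum_{d=1}^D\mathbb{I}(e_d\le x)$ be their empirical CDFs, and let $F_e$ be the stationary CDF of $e_{D+1}$. Since $\mathcal{C}^\alpha_{D+1}$ is defined by $\hat Q(\hat\beta)\le \hat e_{D+1}\le \hat Q(1-\alpha+\hat\beta)$, and by construction the two quantiles differ in empirical level by exactly $1-\alpha$, I write
\[
P(\vesub{y}{D+1}\in\mathcal{C}^\alpha_{D+1}\mid \vesub{t}{D+1})-(1-\alpha)
\]
as a difference of two terms of the form $P(\hat e_{D+1}\le \hat Q(\gamma))-\hat F_D(\hat Q(\gamma))$, with $\gamma\in\{\hat\beta,1-\alpha+\hat\beta\}$. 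Each such term I bound uniformly in the threshold $x$ by
\[
\sup_x|\hat F_D(x)-\tilde F_D(x)|+\sup_x|\tilde F_D(x)-F_e(x)|+\sup_x|P(\hat e_{D+1}\le x)-F_e(x)|.
\]
This decomposition produces the overall factor $4$: two endpoints, each contributing a two-sided error. It also produces the additive ``$+1$'' piece, which comes from the jumps of the empirical CDF at the quantile.

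The middle term is a uniform deviation of the empirical CDF for a stationary $\alpha$-mixing sequence; stationarity and mixing of $\{e_d\}$ are inherited from Assumption \ref{Assum:error_noniid}. I will invoke a Bernstein-type blocking inequality for strong mixing sequences. Splitting into blocks of length $\sim (D/\log D)^{1/3}$ and using $\sum_l\alpha_l<M$, then taking a union bound over a grid, gives a deviation of order $M^{1/3}(\log D)^{2/3}/(2D)^{1/3}$, which is the first term of the theorem. I take this essentially from Lemma 1 of \citet{Xu23SPCI}.

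For the first and third terms I use the Lipschitz argument under Assumption \ref{Assum:CDFLip}. If $|\hat e_d-e_d|\le\eta$, then $\mathbb{I}(\hat e_d\le x)$ is sandwiched between $\mathbb{I}(e_d\le x\mp\eta)$. The empirical discrepancy is therefore at most the number of $d$ with $|\hat e_d-e_d|>\eta$, divided by $D$, plus $\tilde F_D(x+\eta)-\tilde F_D(x-\eta)$. The latter is at most $2L_{D+1}\eta$ plus twice the mixing term. Choosing $\eta=G$ and using a Markov/Chebyshev bound $\frac1D\#\{d:|\hat e_d-e_d|>G\}\le \frac{1}{DG}\sum_d|\hat e_d-e_d|\lesssim G$ yields the terms $L_{D+1}G^2+2L_{D+1}G+G$. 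The same argument applied to the single test point gives the third term.

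The core, and what I expect to be the main obstacle, is showing $\frac1D\sum_d|\hat e_d-e_d|$ and $|\hat e_{D+1}-e_{D+1}|$ are $O(G(D,\tau)^2)$ up to these constants. Expanding the quadratic forms gives
\[
\hat e_d-e_d=(\hvesub{\varepsilon}{d}-\bar{\ve\varepsilon})^\top\hat\Sigma_D^{-1}(\hvesub{\varepsilon}{d}-\bar{\ve\varepsilon})-\vess{\varepsilon}{d}{T}\Sigma^{-1}\vesub{\varepsilon}{d}.
\]
I split this into a residual-error part and a covariance-error part. The residual-error part is controlled via Assumptions \ref{Assum:error_sum} and \ref{Assum:bound_cov} and Cauchy--Schwarz, giving $\delta_D^2/\lambda$. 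The covariance-error part uses $\|\hat\Sigma_D^{-1}-\Sigma^{-1}\|\le\lambda^{-2}\|\hat\Sigma_D-\Sigma\|$, multiplied by $\|\vesub{\varepsilon}{d}\|^2$. That norm is bounded through $c_4\omega$ together with a Chebyshev correction $\sqrt{c_5\omega/(D\delta)}$ from Assumption \ref{Assum:weakdepen&tail}, holding with probability $1-\delta$. Finally I need $\E\|\hat\Sigma_D-\Sigma\|_F^2\le C_1A(D,\tau)$ for the thresholded estimator under the functional-dependence measure $\Theta_{j,\omega}$. For this step I will import the thresholded-covariance bounds for dependent data, in the style of Chen, Xu and Wu (2013) and \citet{KIM2023Cov}. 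Entrywise, each error is either bounded by $\min(\tau^2,\sigma_{jk}^2)$ (the bias from killing small entries) or controlled by Nagaev-type tail inequalities. Those tail inequalities produce the three regimes of $H(\tau)$, according to how $c_2$ compares with $1/2-2/\omega$, and the $\min(1/D,\tau^{2-\omega/2}D^{-\omega/4},H)$ term. Verifying that these dependent-data inequalities apply to the residual process, rather than to the true errors, is the delicate point. I would handle it by treating $\hat\varepsilon-\varepsilon$ as absorbed through the $\Theta_{j,\omega}$ condition. Combining the pieces and taking square roots yields $G(D,\tau)$, and substituting back gives the stated bound.
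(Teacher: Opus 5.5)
Most of your ingredients match the paper's proof:
\begin{itemize}
\item the mixing DKW-type bound, which the paper imports as Lemma C.11 of \citet{xu2024CPMTS-ellipsoidal} and which defines the event $\mathcal{A}_D$;
\item the Lipschitz sandwich together with the Markov count $|\mathcal{S}|/D\le G(D,\tau)$ of calibration indices with $|\hat e_d-e_d|\ge G(D,\tau)$;
\item the perturbation bound $\lambda^{-1}\|\hat\varepsilon_d-\varepsilon_d\|^2+\lambda^{-2}\|\varepsilon_d\|^2\|\hat\Sigma_D-\Sigma\|$, combined with Chebyshev and Theorem 2.1 of \citet{Chen2013Covariance}.
\end{itemize}
The gap is in your top-level reduction. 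You bound $P(\hat e_{D+1}\le\hat Q(\gamma))-\hat F_D(\hat Q(\gamma))$ ``uniformly in the threshold $x$'', using in particular $\sup_x|P(\hat e_{D+1}\le x)-F_e(x)|$. That only controls a deterministic threshold, or one independent of the test score. Here $\hat Q(\gamma)$ is built from $\hat e_1,\dots,\hat e_D$ (and $\hat\beta$, $\hat\Sigma_D$, $\bar\varepsilon$). Under Assumption 1 these are dependent on $\hat e_{D+1}$. For a dependent random threshold $X$, the inequality $|P(\hat e_{D+1}\le X)-\E F_e(X)|\le\sup_x|P(\hat e_{D+1}\le x)-F_e(x)|$ is false in general (take $X=\hat e_{D+1}$). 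Dropping independence between test and calibration points is the whole point of the theorem, so this step cannot be waved through. As written, your argument is valid only when the test point is independent of the calibration set.

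The paper's reduction avoids conditioning on the calibration data altogether. It writes coverage as the event $\{\beta<\hat F_D(\hat e_{D+1})<1-\alpha+\beta\}$. It notes that $P(\beta<F_e(e_{D+1})<1-\alpha+\beta)=1-\alpha$ exactly, because $F_e(e_{D+1})$ is marginally uniform. It then uses $|\mathbb{I}\{a\le x\}-\mathbb{I}\{a\le x'\}|\le\mathbb{I}\{|a-x'|\le|x-x'|\}$ to reduce each endpoint $\gamma\in\{\beta,1-\alpha+\beta\}$ to
\[
P\bigl(|F_e(e_{D+1})-\gamma|\le|\hat F_D(\hat e_{D+1})-F_e(e_{D+1})|\bigr).
\]
The quantity $|\hat F_D(\hat e_{D+1})-F_e(e_{D+1})|$ is bounded pathwise by $\sup_x|\hat F_D-\tilde F_D|+\sup_x|\tilde F_D-F_e|+L_{D+1}|\hat e_{D+1}-e_{D+1}|$. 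On $\mathcal{A}_D$ intersected with the high-probability events of the perturbation lemmas, this is at most a deterministic $r$. Hence the probability is at most $2r$ plus the failure probabilities, and only the marginal law of $e_{D+1}$ is used. In effect, the sup-norm bounds pin the random threshold, on the $F_e$ scale, to within a deterministic distance of $\gamma$; that is the missing idea.

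This is also where the constants really come from:
\begin{itemize}
\item the factor $2$ per endpoint (hence $4$ on the $G$-terms and $12$ on the mixing term) is the bound $P(|U-\gamma|\le r)\le 2r$ for uniform $U$;
\item $L_{D+1}G^2$ enters through $|F_e(\hat e_{D+1})-F_e(e_{D+1})|\le L_{D+1}G^2$;
\item the ``$+1$'' is the Markov count, not jumps of the empirical CDF.
\end{itemize}
Replace your first step with this reduction, and the remainder of your plan then matches the paper's argument.
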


\begin{remark}
    According to \citet{Chen2013Covariance}, if the truncation parameter $\tau$ is chosen to be of order $D^{-1/2}$, then $A(D,\tau)=O(D^{-1}).$ Consequently, $G(D,\tau)=O(D^{-1/2})$. The convergence rate of the coverage bound is therefore dominated by the first term, yielding $\frac{12
    M^{1/3}(\log D)^{2/3}}{(2D)^{1/3}}=O(D^{-1/3})$, which is consistent with the rate established in \citet{xu2024CPMTS-ellipsoidal}. 
\end{remark}


\subsection{Simulation Studies}\label{sec::Sim}
In this section, we evaluate the proposed OLFTSA method for multivariate functional forecasting and uncertainty quantification. Synthetic data are generated following the process in \citet{Trinka2023FSSA}, adapted for continuous FTS under rolling-window segmentation. This enables direct comparison with FSSA \citep{Trinka2023FSSA} using the residual bootstrap for uncertainty quantification. We evaluate forecast accuracy via root mean square error (RMSE) and mean absolute error (MAE), and conformal coverage validity via empirical coverage probability (ECP) and mean interval width (MIW) at the nominal level $0.95$. 

The univariate continuous underlying process $z(s)$ is simulated over a global time domain $s \in [0, T]$ with $T=100$, discretized at $N=\tilde{D}n$ equally spaced points $\{s_i\}_{i=1}^N$. Here, $n$ denotes the length of each segmented curve; larger $n$ yields finer temporal cadence, and larger $\tilde{D}$ increases total number of patches to access asymptotic coverage. 
The underlying process is decomposed as $z(s) = m(s) + h(s)$, where $m(s)$ is the mean trend given by $
m(s) = \nu s + \exp(\tilde{s}^2)\cos(2\pi \omega s) + \cos(4\pi \tilde{s})\sin(2\pi \omega s)$ with drift rate $\nu = 0.02$, frequency $\omega = 0.2$, and local daily cycle $\tilde{s}=s \pmod{1}$. The term $h(s)$ is a stochastic component constructed via a basis expansion with time-varying coefficients, $h(s) = \sum_{j=1}^J \zeta_j(s) \psi_j(\tilde{s})$, where $\{\psi_j(\cdot)\}_{j=1}^J$ are fixed Fourier basis functions on the local domain with $J=5$, ensuring a smooth functional structure. The latent coefficients $\ve{\zeta}(s) = (\zeta_1(s), \cdots, \zeta_J(s))^\top$ follow a vector autoregressive process of order $1$:
$\ve{\zeta}(s)=\ve{A} \ve{\zeta}(s-\Delta_{s})+\ve{\eta}(s), \ \ve{\eta}(s) \sim N(0,0.05^2 \mathbf{I}_J),$
where $\Delta_s$ is the discretization step. To simulate strong day-to-day autocorrelation ($\approx 0.8$), the transition matrix $\ve{A}\in \Re^{J \times J}$ is chosen to be diagonal with decreasing diagonal entries. One sample trajectory is shown in Figure \ref{fig:OneSimu} in the Online Supplementary Materials.

To transform $z(s)$ into FTS observations $\{y_d(t_i),i=1,\cdots, n\}_{d=1}^D$ as described in Section \ref{sec::FTS}, we employ a rolling-window segmentation scheme. The $d$-th observed curve with measurement error is given by $y_d(t_i)=z(s_{(d-1)n(1-\gamma)+i}) + \epsilon_{d,i}$, where $\epsilon_{d,i}\sim N(0,0.05^2)$. The overlap ratio is  $\gamma=n^\prime/n \in [0,1)$, where $n^\prime$ is the number of shared points between consecutive curves; $\gamma=0$ corresponds to a standard non-overlapping segmentation. 

We consider $\tilde{D}\in \{150,300\}$, $n \in \{100, 300, 500\}$ and $\gamma \in \{0, 0.4, 0.6, 0.8\}$. For each configuration, we perform a 60\%/40\% split for training and testing. We compare the proposed OLFTSA method, which employs quantile random forests for multivariate conformal prediction, against the FSSA method equipped with residual-based bootstrap resampling \citep{Trinka2023FSSA}. The results are averaged over 10 replications. Figure \ref{fig:sim_results_D300} summarizes the results for the larger setting $\tilde{D}=300$. Similar patterns hold for $\tilde{D}=150$; the full numerical results for both settings are reported in Tables \ref{Appen:tab:sim_results_D300} \ref{Appen:tab:sim_results_D150} in the Online Supplementary Materials. The FSSA method slightly outperforms OLFSTA for point estimates under the non-overlapping setting for smaller curve lengths (e.g., $n\in \{100,300\}$), which aligns with the findings in \citet{Trinka2023FSSA}. However, the benefits of the overlapping scheme are evident: as the overlap ratio $\gamma$ increases, both RMSE and MAE decrease, and OLFSTA significantly outperforms FSSA at higher overlap levels (e.g., $\gamma \ge 0.4$). For uncertainty quantification, the OLFSTA conformal prediction achieves coverage probabilities close to the nominal level across all settings, whereas the FSSA's residual-based bootstrap suffers from severe undercoverage and wider interval widths. Furthermore, increasing the overlap ratio for OLFSTA leads to sharper inference with narrower interval widths without compromising coverage. As the number of patches $\tilde{D}$ increases, the empirical coverage probabilities of OLFSTA get close to the nominal level, demonstrating the asymptotic validity of the procedure. Finally, the proposed OLFSTA is computationally superior, requiring significantly less time than the FSSA approach.

\begin{figure}[ht!]
    \centering
    \includegraphics[width=1\linewidth]{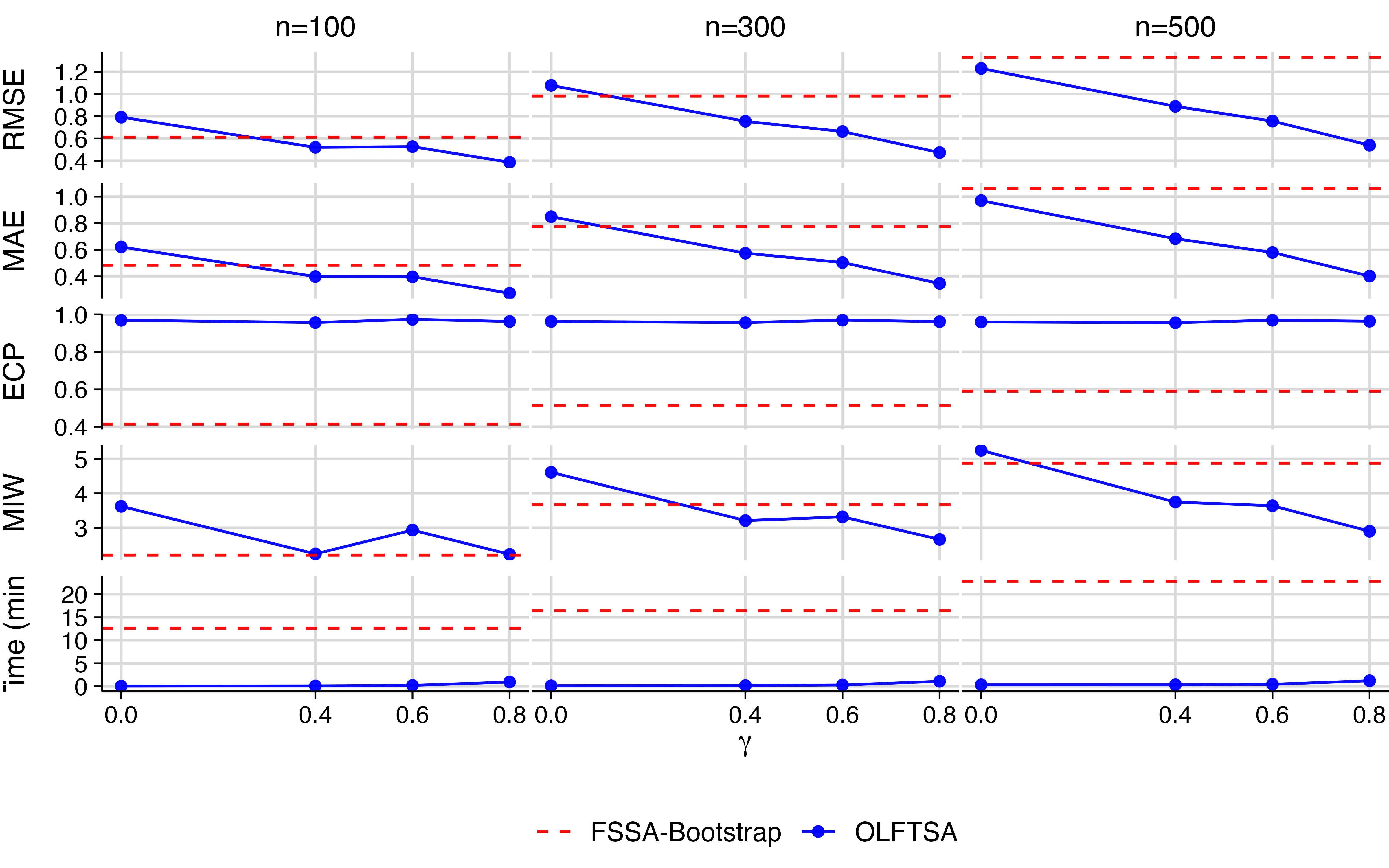}
    \caption{Simulation results for $\tilde{D}=300$ comparing the proposed OLFTSA method under varying overlap ratios $\gamma$ with FSSA-Bootstrap for three curve lengths, $n=100$, $300$, and $500$. Rows display point forecast accuracy (RMSE and MAE), uncertainty quantification performance at nominal coverage level 0.95 (ECP and MIW), and computational time in minutes. Blue solid lines with points represent OLFTSA, while red dashed horizontal lines represent FSSA-Bootstrap. Results are averaged over 10 replications.}
    \label{fig:sim_results_D300}
\end{figure}

\section{Application on Geomagnetic Indices Data}\label{sec::App}

We apply the OLFTSA framework to the geomagnetic index data described in Section \ref{sec::Data}. Data is split into training and testing sets for a rigorous out-of-sample evaluation. Data spanning January 1, 1995 through December 31, 2019 are used for model training, covering solar cycles 23 and 24, while data from January 1, 2020 through 2025 in solar cycle 25 serve as the held-out test set. Forecasting performance is evaluated using three standard metrics: RMSE and MAE to quantify the magnitude of forecast errors, and the Pearson correlation coefficient ($R$) to assess phase agreement between predicted and observed indices. 

We assess the model's ability to forecast the temporal evolution of the magnetosphere-ionosphere system under varying solar wind driving conditions. We benchmark the proposed OLFTSA framework against the long short-term memory (LSTM) network, a standard deep learning baseline in space weather forecasting \citep{Tan2018Geomagnetic, polozov2023sme, Nair2023SW}, as well as the CHRONOS time series foundation model \citep{ansari2024chronos}. For the LSTM baseline, we adopt a stacked encoder-decoder architecture tailored for multivariate sequence-to-sequence forecasting, with min-max normalization applied to all input channels to improve training stability and convergence. The encoder comprises two LSTM layers with sequence return enabled to preserve temporal structure, followed by 10\% dropout for regularization, while decoding proceeds through a dense layer with ReLU activation and a final linear layer projecting to the forecast horizon. For CHRONOS, we employ the base (smallest) variant without multivariate extensions or covariate incorporation (see \citep{ansari2025chronos2} for an enhanced version), reflecting a conservative design choice motivated by the prohibitive sequence lengths associated with multi-hour, high-resolution geomagnetic forecasts. This setup prioritizes computational efficiency, yet still provides a stringent baseline and highlights the potential benefits of scaling to larger foundation models as computational resources increase. Forecasting results are stratified by horizon (6-hour and 24-hour ahead), patch duration (window length), and geomagnetic index to disentangle performance across distinct magnetospheric current systems.

\begin{table}[hpbt]
\centering
\caption{Six-hour ahead forecast performance comparison.}
\label{tab:res_6hr}
\resizebox{\columnwidth}{!}{%
\begin{tabular}{lccccccccc}
\toprule
\multirow{2}{*}{Index} & \multicolumn{3}{c}{OLFTSA} & \multicolumn{3}{c}{LSTM} & \multicolumn{3}{c}{CHRONOS} \\
\cmidrule{2-10} 
 & RMSE & MAE & R  & RMSE & MAE & R& RMSE & MAE & R \\ 
 \midrule
Kp & 0.8360 & 0.6330 & 0.7665 & 1.0123 & 0.8442 & 0.7264 & 0.9484 & 0.7099 & 0.6933 \\
Dst & 8.4104 & 5.0397 & 0.9070 & 13.3263 & 10.0410 & 0.8310 & 10.1035 & 5.8273 & 0.8629 \\
{SYM-H} & 8.8909 & 5.2990 & 0.9010 & 12.5482 & 9.4569 & 0.8474 & 10.7060 & 6.0783 & 0.8538 \\
{SMR} & 8.9326 & 5.0798 & 0.8858 & 13.8956 & 10.8873 & 0.8294 & 10.5271 & 5.5673 & 0.8403 \\ 
{SME} & {162.9377} & {106.1887} & {0.6525} & 222.0116 & 190.3414 & 0.5874 & 180.7999 & 107.2291 & 0.5774 \\
\bottomrule
\end{tabular}%
}
\end{table}

In 6-hour forecasts, the OLFTSA attains lower RMSE and MAE and higher correlations than both LSTM and CHRONOS for all five indices (Table \ref{tab:res_6hr}). For the Dst index, which is central to storm-time ring-current monitoring, OLFTSA with a 48-hour window length (corresponding to an overlap ratio of $(48-6)/48=0.8750$) attains an RMSE of $8.4104$ nT, compared with $13.3263$ for LSTM and $10.1035$ for CHRONOS. Figure \ref{fig:DstSMR-6hr} illustrates the 6-hour ahead forecasts during an intense geomagnetic storm event in solar cycle 25. The event features a rapid intensification where index values plummet to approximately $-400$ nT. OLFTSA captures the sharp main-phase drop to nearly $-400$ nT and the subsequent recovery of the ring current with a relatively smooth trajectory. The LSTM model exhibits pronounced high-frequency instability, which is particularly noticeable in the high-resolution SMR forecasts. The auroral electrojet index SME, characterized by strong stochasticity and bursty auroral dynamics, remains the most challenging target; in this case, OLFTSA still achieves the highest correlation of $0.6525$, compared with $0.5874$ for LSTM and $0.5774$ for CHRONOS, showing the benefit of functional representation in high-frequency variability.  


\begin{figure}[hpbt!]
	\centering
\includegraphics[width=1\textwidth]{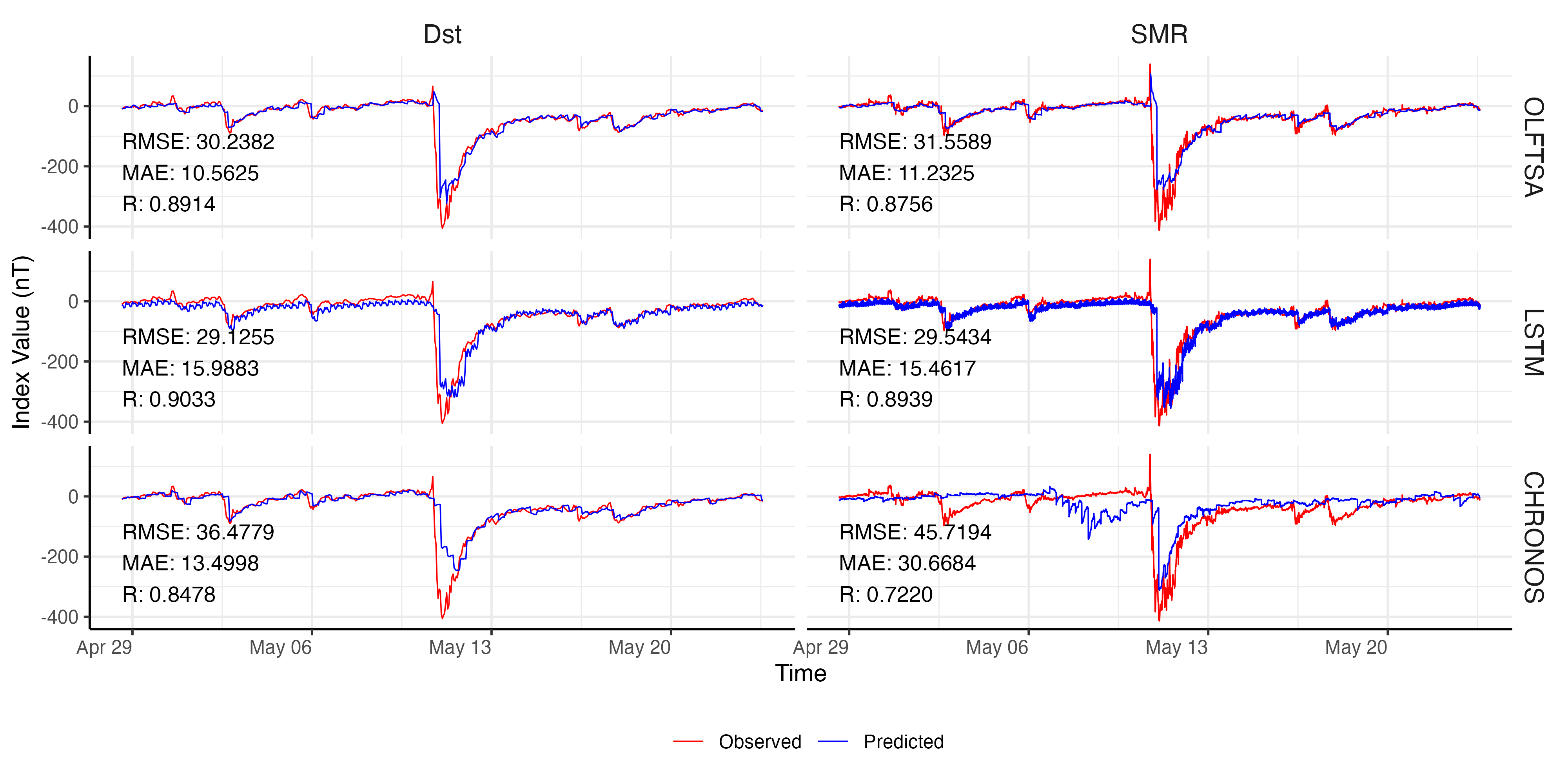}\\
\includegraphics[width=1\textwidth]{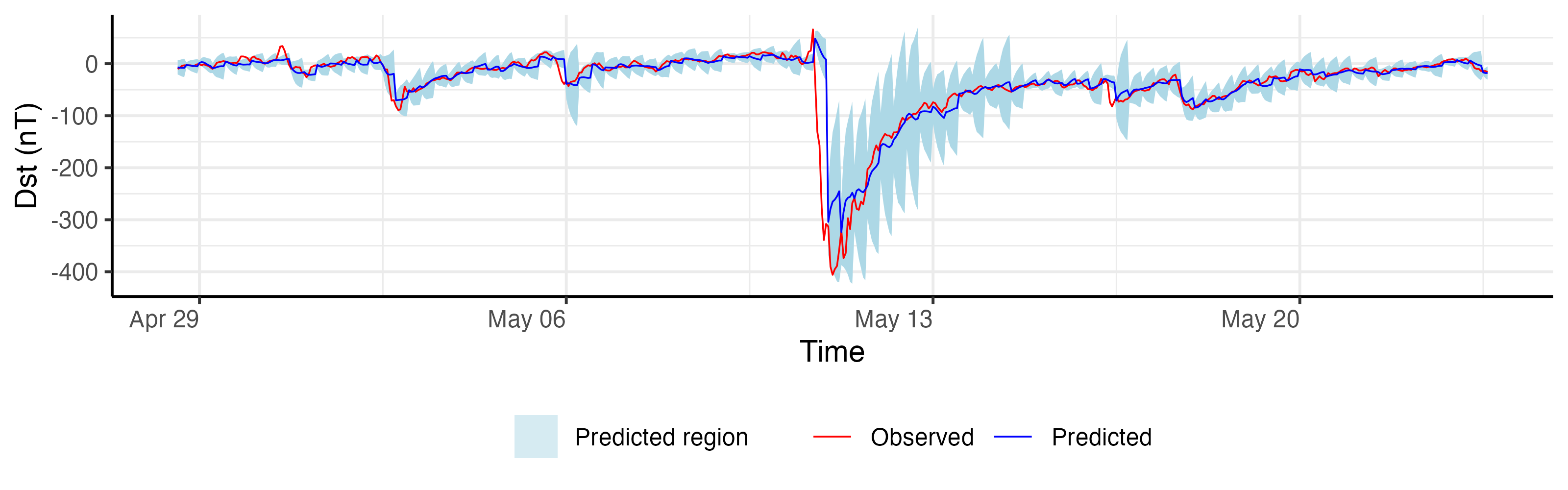}
	\caption{Six-hour ahead forecasting performance for Dst and SMR indices during an intense geomagnetic storm event (May 2024). The top panel compares the observed (red) and predicted (blue) values from OLFTSA (top), LSTM (middle), and CHRONOS (bottom). The bottom panel shows the prediction intervals of Dst given by the conformal method.}
	\label{fig:DstSMR-6hr}
\end{figure}

Extending the forecast horizon to $24$ hours introduces greater uncertainty and yields performance degradation across all models. For the planetary Kp index, OLFTSA with a $72$-hour window length (corresponding to an overlap ratio of $(72-24)/72=0.6667$) maintains a correlation of $0.5899$, whereas CHRONOS and LSTM drop to $0.4909$ and $0.4630$, respectively. For Dst and SYM-H, forecasting errors naturally increase at 24 hours, but the functional approach still yields higher correlations (above $0.75$) than both deep-learning baselines, indicating better retention of large-scale storm morphology at longer lead times (Table C.2 in Online Supplementary Material). Forecasting SME at a 24-hour horizon is expected to be particularly challenging because auroral substorms typically have much shorter life cycles and are often triggered by solar wind disturbances occurring only tens of minutes in advance. This inherently limits predictability at such long lead times. Figure \ref{fig:SME-24hr} shows the forecasts for a highly active SME interval in March 2021, where it exhibited marked stochasticity and rapid excursions beyond $2000$ nT. 
OLFTSA achieves higher correlations than LSTM and CHRONOS and reproduces the temporal structure and envelope of activity more closely, although some extreme peaks are underestimated. Once functional features are extracted, forecasting reduces to a standard multivariate time series prediction, which is faster than the transformer-based CHRONOS model.


\begin{figure}[ht]
	\centering
\includegraphics[width=1\textwidth]{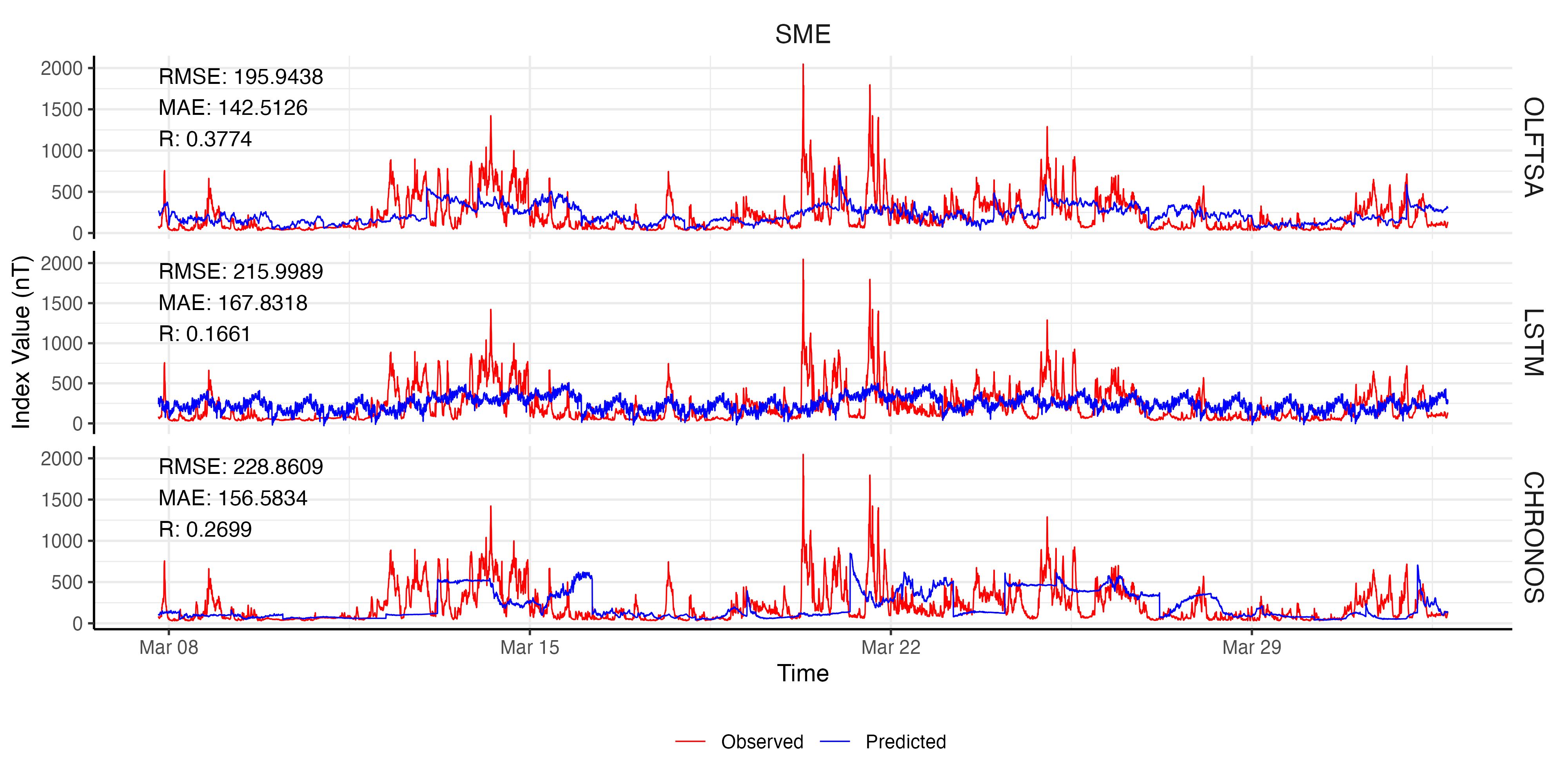}
	\caption{Twenty-four-hour ahead forecasting for SME during a highly active time (Mar 2021) with OLFTSA, LSTM and CHRONOS. }
	\label{fig:SME-24hr}
\end{figure}


The proposed model provides improved forecasting accuracy while giving predictive uncertainty for each geomagnetic index with reduced computational cost. For the 6-hour ahead forecasts, where the number of input patches is larger, the Dst index achieves an ECP of $0.9236$ and an MIW of $28.0710$ nT, as illustrated in Figure \ref{fig:DstSMR-6hr}. When the forecasting horizon extends to 24 hours, the ECP for Dst decreases to $0.8918$ with an MIW of $56.6433$ nT, while the SYM-H index attains an ECP of $0.8968$ and an MIW of $58.8388$ nT. 



For 24-hour ahead forecasting, each Kp prediction step under OLFTSA requires on average $0.9347$ seconds, while SYM-H predictions, due to their higher temporal resolution, require $15.4439$ seconds on average. The CHRONOS foundation model requires on average 1.6118 seconds for Kp, 2.3443 seconds for Dst, and 20.6746 seconds for SYM-H. After extracting functional PCs from the training data, OLFTSA forecasting reduces to multivariate time series prediction (see Section \ref{sec:Method}), which is not time-consuming. 
Figure \ref{fig:ComTime} compares computational costs across low-resolution Kp, medium-resolution Dst, and high-resolution SYM-H. Training the multivariate forecasting model utilizes all data from 1995 through the current time. Prediction times therefore increase slightly as the forecast horizon is extended. 

\begin{figure}[hpbt!]
	\centering
\includegraphics[width=1\textwidth]{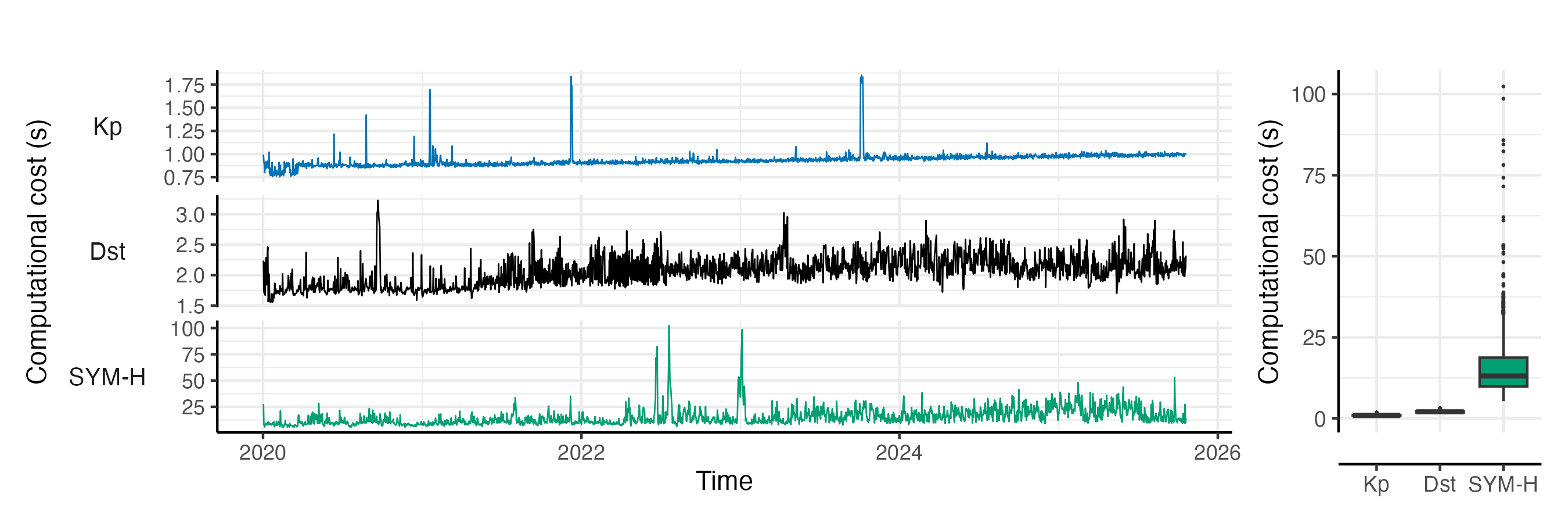}
	\caption{Computational costs for 24-hour ahead forecasting of Kp (low-resolution), Dst (medium-resolution), and SYM-H (high-resolution).}
	\label{fig:ComTime}
\end{figure}

\section{Conclusion}\label{sec:Disc}
This paper introduces an overlapping FTS analysis framework integrated with conformal prediction for multivariate forecasting of geomagnetic indices. The methodology combines extended temporal windowing, FPCA, and multivariate time series modeling to accommodate heterogeneous indices in space weather prediction. 

The overlapping rolling-window approach improves estimation stability over non-overlapping schemes. Representing 48--72 hour patches in functional form helps maintain temporal continuity, while FPCA-based dimensionality reduction facilitates joint forecasting of Kp, Dst, SYM-H, SME, and SMR indices using solar wind parameters. In the applications considered here, the method attains lower forecast errors and higher correlations than LSTM and CHRONOS benchmarks, particularly at 6-hour horizons, with comparable behavior at 24 hours. Extending window lengths with increased overlap ratios appears beneficial in these experiments, although longer windows require retaining more functional PCs and therefore increase computational costs in the multivariate forecasting step. 

The adapted conformal prediction procedure provides distribution-free coverage guarantee theoretically, and produce empirical coverage probabilities close to the nominal 95\% level. This offers a robust mechanism for uncertainty quantification that can be reliably adopted in operational settings. The OLFTSA framework is also relatively efficient computationally, with per-step predictions requiring on the order of seconds rather than several minutes typically associated with the deep-learning baselines considered here.


\section*{Data and Code Availability}
The data and code are available at \url{https://doi.org/10.5281/zenodo.20274712}. 

\if1\anon
{
\section*{Acknowledgments}
Chen's work is supported by funds from the the NASA Space Weather Center of Excellence program under award No. 80NSSC23M0191 and No. 80NSSC23M0192.
} \fi

\bibliography{Bibliography-MM-MC}

\newpage

\def\spacingset#1{\renewcommand{\baselinestretch}%
	{#1}\small\normalsize} \spacingset{1}

\setcounter{page}{1} 
\spacingset{1.9} 
\appendix
\renewcommand{\thesection}{\Alph{section}} 
\setcounter{section}{0}
\setcounter{theorem}{0}
\setcounter{figure}{0}
\setcounter{table}{0}

\numberwithin{theorem}{section}
\numberwithin{lemma}{section}
\numberwithin{equation}{section}
\numberwithin{figure}{section}
\numberwithin{table}{section}
\begin{center}
	{\Large\bf SUPPLEMENTARY MATERIALS}\\[1em]
	{\Large \textbf{Stable Multivariate Functional Time Series Prediction for Major Geomagnetic Indices}}
\end{center}

\section{Proof of Theorem \ref{Thm:coverage}}\label{Appen:ProofThem1}
We begin by recalling the necessary notation. The estimated prediction residual is defined as $\hvesub{\varepsilon}{d}=\vesub{y}{d}-\hvesub{{y}}{d} \in \Re^{n-n^\prime}$, and the scalar nonconformity score is given by $\hat{e}_{d}=\hvess{\varepsilon}{d}{T}\hat{\Sigma}_D^{-1}\hvesub{\varepsilon}{d} \in \Re$. The corresponding true noise vector is $\vesub{\varepsilon}{d}$, with covariance $\Sigma$, yielding the theoretical score $e_{d} = \vess{\varepsilon}{d}{T}\Sigma^{-1}\vesub{\varepsilon}{d}$.

We define the empirical cumulative distribution functions (CDFs) for the estimated and true scores as $\hat{F}_{D+1}(x)=\frac{1}{D}\sum_{d=1}^{D} \mathbb{I}\{\hat{e}_{d}<x\}$ and $\tilde{F}_{D+1}(x)= \frac{1}{D}\sum_{d=1}^{D} \mathbb{I}\{{e}_{d}<x\}$, respectively. The CDF of the true nonconformity score is denoted by $F_e(x)=P(e<x)$. By the probability integral transform, $F_e(\cdot)$ is uniformly distributed on $[0,1]$ with continuous and strictly increasing scores.

\begin{proof}   
	For any $\beta \in [0,\alpha]$, the deviation of the conditional coverage from the nominal level $1-\alpha$ can be bounded as follows:
	\begin{align}
		&\left| P\left(\left. \vesub{y}{D+1} \in \mathcal{C}_{D+1}^\alpha \right| \vesub{t}{D+1}\right)-(1-\alpha) \right| \nonumber \\
		=& \left| P\left( \beta<\hat{F}_{D+1}(\hat{e}_{D+1})< 1-\alpha + \beta \right)-(1-\alpha) \right|\nonumber\\
		=&\left| P\Big( \beta<\hat{F}_{D+1}(\hat{e}_{D+1})< 1-\alpha + \beta \Big)-P\Big( \beta<{F}_{e}({e}_{D+1})< 1-\alpha + \beta \Big) \right|\nonumber\\
		\leq & \E \left| \mathbb{I}\Big\{ \beta<\hat{F}_{D+1}(\hat{e}_{D+1})< 1-\alpha + \beta\Big\}-\mathbb{I}\Big\{ \beta<{F}_{e}({e}_{D+1})< 1-\alpha + \beta\Big\} \right| \nonumber\\
		\leq & \E \left| \mathbb{I}\Big\{ \beta<\hat{F}_{D+1}(\hat{e}_{D+1})\Big\}-\mathbb{I}\Big\{ \beta<{F}_{e}({e}_{D+1})\Big\} \right| \nonumber\\
		& \quad +\E \left| \mathbb{I}\Big\{ \hat{F}_{D+1}(\hat{e}_{D+1})< 1-\alpha+\beta\Big\}-\mathbb{I}\Big\{ {F}_{e}({e}_{D+1}) < 1-\alpha+\beta\Big\} \right|.\label{App:equ_inter}
	\end{align}
	
	Using the inequality \begin{align*}
		& \left|\mathbb{I}\left\{a\leq x \leq b \right\}-\mathbb{I}\left\{a\leq x^\prime \leq b \right\} \right| \\
		\leq & \left|\mathbb{I}\left\{a\leq x \right\}-\mathbb{I}\left\{a\leq x^\prime \right\} \right|+  \left|\mathbb{I}\left\{x \leq b \right\}-\mathbb{I}\left\{ x^\prime \leq b \right\} \right|\\
		\leq & \mathbb{I}\left\{ \left| a-x^\prime\right| \leq \left| x-x^\prime \right|  \right\} + \mathbb{I}\left\{ \left|b-x^\prime\right| \leq \left| x-x^\prime \right|  \right\} 
	\end{align*} for any constants $a$, $b$ and real-valued random variables $x$ and $x^\prime$, we can bound the right-hand side of \eqref{App:equ_inter} by
	\begin{align}
		&\E \left| \mathbb{I}\Big\{ \beta<\hat{F}_{D+1}(\hat{e}_{D+1})\Big\}-\mathbb{I}\Big\{ \beta<{F}_{e}({e}_{D+1})\Big\} \right| \nonumber\\
		& \quad +\E \left| \mathbb{I}\Big\{ \hat{F}_{D+1}(\hat{e}_{D+1})< 1-\alpha+\beta\Big\}-\mathbb{I}\Big\{ {F}_{e}({e}_{D+1}) < 1-\alpha+\beta\Big\} \right| \nonumber\\
		\leq & \E \left[ \mathbb{I}\left\{ \Big|{F}_{e}({e}_{D+1})-\beta\Big|\leq \Big| \hat{F}_{D+1}(\hat{e}_{D+1})-F_e({e}_{D+1}) \Big|\right\} \right]\nonumber\\
		&\quad + \E \left[ \mathbb{I}\left\{ \Big|{F}_{e}({e}_{D+1})-(1-\alpha+\beta)\Big|\leq \Big| \hat{F}_{D+1}(\hat{e}_{D+1})-F_e({e}_{D+1}) \Big|\right\} \right]\nonumber\\
		= & P\left( \Big|{F}_{e}({e}_{D+1})-\beta\Big|\leq \Big| \hat{F}_{D+1}(\hat{e}_{D+1})-F_e({e}_{D+1}) \Big|\right)\nonumber\\
		&\quad + P \left(\Big|{F}_{e}({e}_{D+1})-(1-\alpha+\beta)\Big|\leq \Big| \hat{F}_{D+1}(\hat{e}_{D+1})-F_e({e}_{D+1}) \Big|\right).\label{App:equ1}
	\end{align}

	Invoking Lemma C.11 of \citet{xu2024CPMTS-ellipsoidal} under Assumptions \ref{Assum:error_noniid} and \ref{Assum:CDFLip}, define the event $$\mathcal{A}_{D}=\left\{\sup_x \big|\tilde{F}_{D+1}(x)-F_e(x) \big| \leq \frac{M^{1/3}(\log D)^{2/3}}{(2D)^{1/3}}\right\}.$$ Then $P(\mathcal{A}_D)> 1-\frac{M^{1/3}(\log D)^{2/3}}{(2D)^{1/3}},$ and conditional on $\mathcal{A}_D$ the same bound holds for the supremum distance.
	
	For any constant $\gamma \in [0,1]$, we can decompose the probability bound 
	\begin{align*}
		& P \left(\Big|{F}_{e}({e}_{D+1})-\gamma\Big|\leq \Big| \hat{F}_{D+1}(\hat{e}_{D+1})-F_e({e}_{D+1}) \Big|\right)  \\
		\leq & P \left(\left. \big|{F}_{e}({e}_{D+1})-\gamma\big|\leq \big| \hat{F}_{D+1}(\hat{e}_{D+1})-F_e({e}_{D+1}) \big| \right| \mathcal{A}_D\right)+ 1-P(\mathcal{A}_D).
	\end{align*} 
	Applying the triangle inequality,
	\begin{align*}
		& \left. \big| \hat{F}_{D+1}(\hat{e}_{D+1})-F_e({e}_{D+1}) \big| \right| \mathcal{A}_D \\
		&\leq \left. \big| \hat{F}_{D+1}(\hat{e}_{D+1})-F_e(\hat{e}_{D+1})\big| + \big|{F}_{e}(\hat{e}_{D+1})-F_e({e}_{D+1}) \big| \right| \mathcal{A}_D \\
		&\leq \left. \sup_x \big| \hat{F}_{D+1}(x)-\tilde{F}_{D+1}(x) \big| \right| \mathcal{A}_D + \left. \sup_x \big| \tilde{F}_{D+1}(x)-{F}_{e}(x) \big| \right| \mathcal{A}_D + L_{D+1}\left| \hat{e}_{D+1}-e_{D+1} \right|.
	\end{align*}
	
	Substituting the bound on $\mathcal{A}_D$, we have
	\begin{align*}
		& \left. \big| \hat{F}_{D+1}(\hat{e}_{D+1})-F_e({e}_{D+1}) \big| \right| \mathcal{A}_D \\
		\leq  & \left. \sup_x \big| \hat{F}_{D+1}(x)-\tilde{F}_{D+1}(x) \big| \right| \mathcal{A}_D+\frac{M^{1/3}(\log D)^{2/3}}{(2D)^{1/3}}+L_{D+1} \left| \hat{e}_{D+1}-e_{D+1} \right|.
	\end{align*}

	Using Lemmas \ref{Lemma:ErrorD+1} and \ref{Lemma:empricalCDF}, we have $|\hat{e}_{D+1}-e_{D+1}|\leq G(D,\tau)^2$ and \begin{align*}
		\sup_x\left. \big| \hat{F}_{D+1}(x)-\tilde{F}_{D+1}(x) \big|\right| \mathcal{A}_D\leq & \left(2L_{D+1}+1 \right)G(D,\tau)+2\left. \sup_x\Big| \tilde{F}_{D+1}(x)-F_e(x)\Big|\right| \mathcal{A}_D \\
		\leq & \left(2L_{D+1}+1 \right)G(D,\tau) + \frac{2 M^{1/3}(\log D)^{2/3}}{(2D)^{1/3}},
	\end{align*}
	
	Combining these results yields \begin{align*}
		& P \left(\left. \big|{F}_{e}({e}_{D+1})-\gamma\big|\leq \big| \hat{F}_{D+1}(\hat{e}_{D+1})-F_e(\hat{e}_{D+1})\big|+ \big|{F}_{e}(\hat{e}_{D+1})-F_e({e}_{D+1}) \big| \right| \mathcal{A}_D\right)\\
		\leq &  \frac{6 M^{1/3}(\log D)^{2/3}}{(2D)^{1/3}}+2\left(L_{D+1}G(D,\tau)+2L_{D+1}+1 \right)G(D,\tau).
	\end{align*}
	
	Finally, applying this bound for $\gamma = \beta$ and $\gamma = 1-\alpha+\beta$ in \eqref{App:equ1} yields the theorem statements \begin{align*}
		\left| P\left(\left. \vesub{y}{D+1} \in \mathcal{C}_{D+1}^\alpha \right| \vesub{t}{D+1}\right)-(1-\alpha) \right|\leq \frac{12
			M^{1/3}(\log D)^{2/3}}{(2D)^{1/3}}+4\left(L_{D+1}G(D,\tau)+2L_{D+1}+1 \right)G(D,\tau).
	\end{align*}
\end{proof}

\begin{lemma}\label{Lemma:noiseBound}
	Under Assumption \ref{Assum:error_sum}, \ref{Assum:bound_cov}, and \ref{Assum:weakdepen&tail} , with probability at least $1-\delta$, \begin{align*}
		\sum_{d=1}^D|\hat{e}_d-e_d| \leq D\cdot G(D,\tau)^2,
	\end{align*} where $G(D,\tau)$ is defined as in Equation \eqref{Eq:G(D,tau)}.
	

\end{lemma}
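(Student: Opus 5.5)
The plan is to decompose each score difference into a part driven by prediction error and a part driven by covariance estimation error, and to bound the two sums separately. With $\hat e_d=\hvess{\varepsilon}{d}{T}\hat\Sigma_D^{-1}\hvesub{\varepsilon}{d}$ and $e_d=\vess{\varepsilon}{d}{T}\Sigma^{-1}\vesub{\varepsilon}{d}$, write
\begin{align*}
\hat e_d-e_d=\Big(\hvess{\varepsilon}{d}{T}\hat\Sigma_D^{-1}\hvesub{\varepsilon}{d}-\vess{\varepsilon}{d}{T}\hat\Sigma_D^{-1}\vesub{\varepsilon}{d}\Big)+\vess{\varepsilon}{d}{T}\big(\hat\Sigma_D^{-1}-\Sigma^{-1}\big)\vesub{\varepsilon}{d}.
\end{align*}
Summing over $d=1,\dots,D$, the first term should give the $\delta_D^2/\lambda$ part of $G(D,\tau)^2$. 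The second term should give the $A(D,\tau)$ part. I will read Assumption \ref{Assum:bound_cov} as a lower bound on the smallest eigenvalues, so that $\|\Sigma^{-1}\|,\|\hat\Sigma_D^{-1}\|\le 1/\lambda$. This is the reading the argument needs, and I will state it explicitly.

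\textbf{First term (prediction error).} Set $u_d=\hvesub{\varepsilon}{d}-\vesub{\varepsilon}{d}$. Expanding the quadratic form gives
\[
\hvess{\varepsilon}{d}{T}\hat\Sigma_D^{-1}\hvesub{\varepsilon}{d}-\vess{\varepsilon}{d}{T}\hat\Sigma_D^{-1}\vesub{\varepsilon}{d}=u_d^\top\hat\Sigma_D^{-1}u_d+2u_d^\top\hat\Sigma_D^{-1}\vesub{\varepsilon}{d}.
\]
The quadratic piece is at most $\|u_d\|^2/\lambda$. By Assumption \ref{Assum:error_sum}, summing it over $d$ gives at most $D\delta_D^2/\lambda$.

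The cross piece is at most $2\|u_d\|\,\|\vesub{\varepsilon}{d}\|/\lambda$. Cauchy--Schwarz over $d$ bounds its sum by $2\sqrt{D\delta_D^2}\sqrt{\sum_d\|\vesub{\varepsilon}{d}\|^2}/\lambda$. The a.s. bound $\|\vesub{\varepsilon}{d}\|\le\sqrt{c_4\omega}$ turns this into $O(D\delta_D\sqrt{c_4\omega})/\lambda$. I expect this cross term to be absorbed into the constant $C_1$ together with the second term.

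\textbf{Second term (covariance estimation).} Use $\hat\Sigma_D^{-1}-\Sigma^{-1}=\hat\Sigma_D^{-1}(\Sigma-\hat\Sigma_D)\Sigma^{-1}$. This gives
\[
\big|\vess{\varepsilon}{d}{T}(\hat\Sigma_D^{-1}-\Sigma^{-1})\vesub{\varepsilon}{d}\big|\le \lambda^{-2}\|\vesub{\varepsilon}{d}\|^2\,\|\hat\Sigma_D-\Sigma\|.
\]
Summing over $d$ produces $\sum_d\|\vesub{\varepsilon}{d}\|^2$. I will control this sum by its mean, at most $c_4\omega$ by the a.s. bound, plus a fluctuation. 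The fluctuation is handled with Chebyshev using $\mathrm{Var}[\|\vesub{\varepsilon}{d}\|^2]\le c_5\omega$ and short-range dependence, and it contributes the factor $\sqrt{c_5\omega/(D\delta)}$ with probability at least $1-\delta/2$.

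For $\|\hat\Sigma_D-\Sigma\|$ I bound the spectral norm by the Frobenius norm. I then invoke the thresholded-covariance moment bound of \citet{Chen2013Covariance} under the physical-dependence condition in Assumption \ref{Assum:weakdepen&tail}:
\[
\E\|\hat\Sigma_D-\Sigma\|_F^2\lesssim \sum_{j,k}\min(\tau^2,\sigma_{jk}^2)+(n-n')^2\min\big(D^{-1},\tau^{2-\omega/2}D^{-\omega/4},H(\tau)\big)=A(D,\tau).
\]
The three regimes of $H(\tau)$ correspond exactly to their cases $c_2\gtrless 1/2-2/\omega$. Markov's inequality at level $\delta/2$ converts this expectation into a high-probability bound carrying a factor $1/\delta$.

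One complication is that $\hat\Sigma_D$ is built from the estimated residuals $\hvesub{\varepsilon}{d}$, not the true ones. The discrepancy is controlled by $\Theta_{j,\omega}$ in Assumption \ref{Assum:weakdepen&tail}, and the extra $\delta_D$-order terms are again absorbed into the leading constants.

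\textbf{Combining, and the main obstacle.} A union bound over the two events gives probability at least $1-\delta$, and adding the two sums yields
\[
\sum_{d=1}^D|\hat e_d-e_d|\le D\Big\{\frac{\delta_D^2}{\lambda}+\frac{C_1}{\lambda^2\delta}\Big(c_4\omega+\sqrt{\tfrac{c_5\omega}{D\delta}}\Big)A(D,\tau)\Big\}=D\,G(D,\tau)^2.
\]
I expect the main obstacle to be matching the exact form of $G$ given in \eqref{Eq:G(D,tau)}. Two points are delicate. First, the cross term $2u_d^\top\hat\Sigma_D^{-1}\vesub{\varepsilon}{d}$ must be absorbed, for example via $2ab\le a^2+b^2$ splitting it between the $\delta_D^2$ term and the $A$ term. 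Second, the Frobenius bound from \citet{Chen2013Covariance} must be carried over from true errors to residual-based estimates. I would handle the first by adjusting constants and the second by citing their theorem under Assumption \ref{Assum:weakdepen&tail} as the paper implicitly does.
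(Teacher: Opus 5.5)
Your route is the paper's own route: the same split of $\hat e_d-e_d$, the same reading $\|\hat\Sigma_D^{-1}\|,\|\Sigma^{-1}\|\le 1/\lambda$ of Assumption~\ref{Assum:bound_cov} (which the paper also uses tacitly), Theorem 2.1 of \citet{Chen2013Covariance}, and a Chebyshev/Markov step. However, the two steps you plan to close by ``absorbing into constants'' do not close. So the argument does not yield the bound with $G(D,\tau)$ as in \eqref{Eq:G(D,tau)}.

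\textbf{The cross term.} Your expansion $u_d^\top\hat\Sigma_D^{-1}u_d+2u_d^\top\hat\Sigma_D^{-1}\vesub{\varepsilon}{d}$ is correct, and so is your bound $2D\delta_D\sqrt{c_4\omega}/\lambda$ on the summed cross term. But with the only control on $u_d$ you use (Assumption~\ref{Assum:error_sum}), this term is linear in $\delta_D$. By contrast, $G(D,\tau)^2$ contains $\delta_D$ only through $\delta_D^2/\lambda$, plus a multiple of $A(D,\tau)$ with $C_1$ free of $D,\tau,\omega$. Since $\delta_D$ and $A(D,\tau)$ are unrelated, the cross-term bound dominates $\delta_D^2+A(D,\tau)$ whenever $\delta_D\to0$ but $A(D,\tau)/\delta_D\to0$. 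For example, take $\tau\asymp D^{-1/2}$, so that $A(D,\tau)=O(D^{-1})$, and any $\delta_D\gg D^{-1}$. The split $2ab\le a^2+b^2$ leaves a non-vanishing $c_4\omega/\lambda$. A weighted split $2ab\le\eta^{-1}a^2+\eta b^2$ with $\eta\lesssim A(D,\tau)$ inflates $\delta_D^2$ to order $\delta_D^2/A(D,\tau)$. The paper sidesteps this only by asserting that the first difference is at most $\|\hat\Sigma_D^{-1}\|\,\|\hvesub{\varepsilon}{d}-\vesub{\varepsilon}{d}\|^2$, i.e.\ by silently dropping the cross term, which is false (take $\hvesub{\varepsilon}{d}=2\vesub{\varepsilon}{d}$). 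Your expansion is the right one, but with it the stated $G$ is out of reach.

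The same objection applies to absorbing the gap between $\hat\Sigma_D$, which is built from $\hvesub{\varepsilon}{d}$, and the estimator Chen et al.\ analyse. That gap is again of order $\delta_D^2+2\delta_D\sqrt{c_4\omega}$ in norm before thresholding. Thresholding is discontinuous, and $\Theta_{j,\omega}$ as written is not their physical dependence measure. So you must either assume their moment bound for $\hat\Sigma_D$ outright (as the paper tacitly does) or carry an explicit extra term.

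\textbf{The Markov step.} From $\E\|\hat\Sigma_D-\Sigma\|_F^2\le C_1A(D,\tau)$ you get $\|\hat\Sigma_D-\Sigma\|\le\sqrt{C_1A(D,\tau)/\delta}$ with probability at least $1-\delta$, not $C_1A(D,\tau)/\delta$. A second-moment bound cannot give a deviation linear in $A(D,\tau)$. Since $A(D,\tau)\to0$ with $\delta$ fixed, this costs a rate, not a constant. The paper derives the square-root form and then drops the root in \eqref{Appen:Equ-Sigma}.

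Separately, your Chebyshev step for $\frac1D\sum_d\|\vesub{\varepsilon}{d}\|^2$ is unnecessary, and its variance $\Var[\|\vesub{\varepsilon}{d}\|^2]/D$ presumes independence. The almost-sure bound $\|\vesub{\varepsilon}{d}\|^2\le c_4\omega$ gives $\sum_d\|\vesub{\varepsilon}{d}\|^2\le Dc_4\omega$ deterministically, so only the covariance event is random and no union bound is needed.

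Granting Chen et al.'s bound for $\hat\Sigma_D$, what your argument actually establishes is that, with probability at least $1-\delta$,
\[
\sum_{d=1}^D|\hat e_d-e_d|\le D\left\{\frac{\delta_D^2+2\delta_D\sqrt{c_4\omega}}{\lambda}+\frac{c_4\omega}{\lambda^2}\sqrt{\frac{C_1A(D,\tau)}{\delta}}\right\}.
\]
You should state the lemma with this modified $G(D,\tau)^2$ rather than try to force the stated form. It still vanishes when $\delta_D\to0$ and $A(D,\tau)\to0$, but more slowly than the Remark claims, and the change propagates into Lemmas \ref{Lemma:ErrorD+1} and \ref{Lemma:empricalCDF}.
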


\begin{proof} For $d=1,\cdots, D,$ we have the decomposition:
\begin{equation}
\begin{aligned}
	|\hat{e}_d-e_d  |= &\left|\hvess{\varepsilon}{d}{T}\hat{\Sigma}^{-1}_D\hvesub{\varepsilon}{d} - \vess{\varepsilon}{d}{T}{\Sigma}^{-1}\vesub{\varepsilon}{d} \right|\\
	\leq &  \left|\hvess{\varepsilon}{d}{T}\hat{\Sigma}^{-1}_D\hvesub{\varepsilon}{d} - \vess{\varepsilon}{d}{T}\hat{\Sigma}^{-1}_D\vesub{\varepsilon}{d} \right|+ \left|\vess{\varepsilon}{d}{T}\hat{\Sigma}^{-1}_D\vesub{\varepsilon}{d} - \vess{\varepsilon}{d}{T}{\Sigma}^{-1}\vesub{\varepsilon}{d}\right| \\
	\leq & \| \hat{\Sigma}^{-1}_D \|\| \hvesub{\varepsilon}{d}-\vesub{\varepsilon}{d} \|^2+ \| \vesub{\varepsilon}{d}\|^2\|\hat{\Sigma}^{-1}_D- \Sigma^{-1} \| \\
	\leq  & \| \hat{\Sigma}^{-1}_D \|\| \hvesub{\varepsilon}{d}-\vesub{\varepsilon}{d} \|^2+ \| \vesub{\varepsilon}{d}\|^2\|\hat{\Sigma}^{-1}_D \| \| \Sigma^{-1} \|\|\hat{\Sigma}_D- \Sigma \|\\
	\leq & \frac{1}{\lambda}\| \hvesub{\varepsilon}{d}-\vesub{\varepsilon}{d} \|^2 + \frac{1}{\lambda^2} \| \vesub{\varepsilon}{d} \|^2 \|\hat{\Sigma}_D- \Sigma \|.
\end{aligned}\label{Appen:Equ-noise}
\end{equation}

By Assumption \ref{Assum:weakdepen&tail}, $\E\left[\|\vesub{\varepsilon}{d} \|^2 \right] \leq c_4\omega. $ Applying Chebyshev's inequality yields
\begin{align*}
P\left(\left| \frac{1}{D}\sum_{d=1}^D \| \vesub{\varepsilon}{d}\|^2- \E\left[ \|\vesub{\varepsilon}{d} \|^2 \right]  \right|\geq \sqrt{\frac{\Var[\left\| \vesub{\varepsilon}{d} \right\|^2 ]}{D \delta}}\right) \leq \frac{\Var[\left\| \vesub{\varepsilon}{d} \right\|^2 }{D\left( \frac{\Var\left[ \left\| \vesub{\varepsilon}{d}\right\|^2\right]}{D\delta} \right)}=\delta.
\end{align*} 
Thus, with probability at least $1-\delta$ \begin{align*}
\frac{1}{D}\sum_{d=1}^D \left\|\vesub{\varepsilon}{d} \right\|^2 \leq &\E \left[\left\| \vesub{\varepsilon}{d}\right\|^2 \right]+\sqrt{\frac{\Var[\left\| \vesub{\varepsilon}{d} \right\|^2 ]}{D \delta}} \leq  c_4 \omega + \sqrt{\frac{c_5\omega}{D \delta}}.
\end{align*}

From Theorem 2.1 in \citet{Chen2013Covariance}, we have 
$$\E \|\hat{\Sigma}_D -\Sigma \|^2 \leq C_1\left[ \Bigg( \sum_{j,k=1}^{n-n^\prime} \min(\tau^2, \sigma_{jk}^2) \Bigg)+ (n-n^\prime)^2 \min \Bigg( \frac{1}{D}, \frac{\tau^{2-\omega/2}}{D^{\omega/4}}, H(\tau) \Bigg)\right].$$

Applying Chebyshev's inequality, \begin{align*}
P\left( \| \hat{\Sigma}_D-\Sigma \|\geq \sqrt{\frac{1}{\delta}\E\|\hat{\Sigma}_D-\Sigma \|^2} \right) \leq \frac{\E\|\hat{\Sigma}_D-\Sigma \|^2 }{\frac{1}{\delta}\E\|\hat{\Sigma}_D-\Sigma \|^2}=\delta,
\end{align*} which means that with probability higher than $1-\delta$, \begin{equation}
\| \hat{\Sigma}_D-\Sigma \| \leq \frac{C_1}{\delta} \left[ \Bigg( \sum_{j,k=1}^{n-n^\prime} \min(\tau^2, \sigma_{jk}^2) \Bigg)+ (n-n^\prime)^2 \min \Bigg( \frac{1}{D}, \frac{\tau^{2-\omega/2}}{D^{\omega/4}}, H(\tau) \Bigg)\right]. \label{Appen:Equ-Sigma}
\end{equation}

Substituting these bounds into the summation of \eqref{Appen:Equ-noise}, with probability at least $1-\delta$, \begin{align*}
&\sum_{d=1}^D|\hat{e}_d-e_d| \\
\leq &D\left\{\frac{\delta_D^2}{\lambda}+ \frac{C_1}{\lambda^2\delta}(c_4\omega + \sqrt{\frac{c_5 \omega}{D\delta}})\left[ \Bigg( \sum_{j,k=1}^{n-n^\prime} \min(\tau^2, \sigma_{jk}^2) \Bigg)+ (n-n^\prime)^2 \min \Bigg( \frac{1}{D}, \frac{\tau^{2-\omega/2}}{D^{\omega/4}}, H(\tau) \Bigg)\right] \right\}.
\end{align*}
\end{proof}

\begin{lemma}\label{Lemma:ErrorD+1}
Under Assumptions \ref{Assum:error_sum}, \ref{Assum:bound_cov}, and \ref{Assum:weakdepen&tail}, with probability at least $1-\delta$, 
\begin{align*}
|\hat{e}_{D+1}-e_{D+1} |\leq  G(D,\tau)^2,
\end{align*} where $G(D,\tau)$ is defined in Equation \eqref{Eq:G(D,tau)}.
\end{lemma}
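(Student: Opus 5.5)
The argument mirrors the proof of Lemma \ref{Lemma:noiseBound}, applied to the single index $d=D+1$ instead of a sum over the calibration set. I would start from the same deterministic decomposition \eqref{Appen:Equ-noise}, now at $d=D+1$:
\begin{align*}
|\hat{e}_{D+1}-e_{D+1}| \leq \frac{1}{\lambda}\|\hvesub{\varepsilon}{D+1}-\vesub{\varepsilon}{D+1}\|^2+\frac{1}{\lambda^2}\|\vesub{\varepsilon}{D+1}\|^2\,\|\hat{\Sigma}_D-\Sigma\|.
\end{align*}
This step only uses Assumption \ref{Assum:bound_cov}, which controls $\|\hat{\Sigma}_D^{-1}\|$ and $\|\Sigma^{-1}\|$ by $1/\lambda$ (interpreting it, as the earlier lemma does, as a lower bound on the smallest eigenvalues). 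It also uses the identity $\hat{\Sigma}_D^{-1}-\Sigma^{-1}=\hat{\Sigma}_D^{-1}(\Sigma-\hat{\Sigma}_D)\Sigma^{-1}$.

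The two terms are then bounded separately.
\begin{itemize}
\item The first term is handled by the second half of Assumption \ref{Assum:error_sum}, which gives $\|\hvesub{\varepsilon}{D+1}-\vesub{\varepsilon}{D+1}\|^2\le\delta_D^2$. The first term is therefore at most $\delta_D^2/\lambda$, matching the first summand inside $G(D,\tau)^2$.
\item For $\|\vesub{\varepsilon}{D+1}\|^2$, Assumption \ref{Assum:weakdepen&tail} gives the almost-sure bound $\max_{d\le D+1}\|\vesub{\varepsilon}{d}\|\le\sqrt{c_4\omega}$. Hence $\|\vesub{\varepsilon}{D+1}\|^2\le c_4\omega\le c_4\omega+\sqrt{c_5\omega/(D\delta)}$ holds deterministically, and no Chebyshev step is needed here. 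This is simpler than in Lemma \ref{Lemma:noiseBound}.
\item For the covariance factor, I would reuse \eqref{Appen:Equ-Sigma}. It is obtained from Theorem 2.1 of \citet{Chen2013Covariance} together with Chebyshev's inequality, and gives $\|\hat{\Sigma}_D-\Sigma\|\le (C_1/\delta)A(D,\tau)$ with probability at least $1-\delta$.
\end{itemize}
Multiplying these bounds yields, on that event,
\[
|\hat{e}_{D+1}-e_{D+1}|\le \frac{\delta_D^2}{\lambda}+\frac{C_1}{\lambda^2\delta}\Big(c_4\omega+\sqrt{\tfrac{c_5\omega}{D\delta}}\Big)A(D,\tau)=G(D,\tau)^2 .
\]

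The only probabilistic event used is the one for $\|\hat{\Sigma}_D-\Sigma\|$, so the overall probability is at least $1-\delta$ with no union bound. This is why I would use the almost-sure bound on $\|\vesub{\varepsilon}{D+1}\|$ rather than a moment bound.

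The main obstacle I expect is the covariance-estimation step. The thresholded estimator is computed from the estimated residuals $\hvesub{\varepsilon}{d}$ rather than the true $\vesub{\varepsilon}{d}$. In addition, the result of \citet{Chen2013Covariance} is stated for weakly dependent data under the $\Theta_{j,\omega}$ condition. I would invoke it exactly as in Lemma \ref{Lemma:noiseBound}, taking Assumption \ref{Assum:weakdepen&tail} as the dependence condition under which that theorem applies. Beyond that, the argument is routine.
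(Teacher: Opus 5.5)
Your proposal follows the paper's own proof of this lemma essentially line for line. You use the decomposition \eqref{Appen:Equ-noise} at $d=D+1$, the pointwise half of Assumption \ref{Assum:error_sum}, the almost-sure bound $\|\vesub{\varepsilon}{D+1}\|^2\le c_4\omega$, and \eqref{Appen:Equ-Sigma} as the single $1-\delta$ event. The paper likewise skips Chebyshev for $\|\vesub{\varepsilon}{D+1}\|^2$ and only inflates $c_4\omega$ to $c_4\omega+\sqrt{c_5\omega/(D\delta)}$ at the end. The problem is that two ingredients you import from the proof of Lemma \ref{Lemma:noiseBound} do not hold as written. So this route, yours and the paper's alike, does not deliver $G(D,\tau)^2$.

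First, the ``deterministic decomposition'' drops a cross term. Write $u=\vesub{\varepsilon}{D+1}$, $h=\hvesub{\varepsilon}{D+1}-\vesub{\varepsilon}{D+1}$, and let $W=\hat\Sigma_D^{-1}$, which is symmetric. Then
\[
(u+h)^\top W(u+h)-u^\top Wu=h^\top Wh+2\,u^\top Wh .
\]
One therefore only gets $\|W\|\,(\|h\|^2+2\|u\|\,\|h\|)$, not $\|W\|\,\|h\|^2$. Already in one dimension with $W=1$, $u=1$, $h=t$, the left side equals $2t+t^2$. Under Assumptions \ref{Assum:error_sum} and \ref{Assum:weakdepen&tail} this piece is at most $(\delta_D^2+2\sqrt{c_4\omega}\,\delta_D)/\lambda$. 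That is of order $\delta_D$ rather than $\delta_D^2$, and it is not absorbed by $G(D,\tau)^2$.

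Second, \eqref{Appen:Equ-Sigma} is not what Chebyshev's inequality yields. From $\E\|\hat\Sigma_D-\Sigma\|^2\le C_1A(D,\tau)$, Markov's inequality applied to $\|\hat\Sigma_D-\Sigma\|^2$ gives $\|\hat\Sigma_D-\Sigma\|\le\sqrt{C_1A(D,\tau)/\delta}$ with probability at least $1-\delta$. The square root is lost in the paper's display. In the relevant regime $C_1A(D,\tau)/\delta<1$ the correct bound is the larger one, so your claim that this step gives $\|\hat\Sigma_D-\Sigma\|\le (C_1/\delta)A(D,\tau)$ is unjustified.

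Your smallest-eigenvalue reading of Assumption \ref{Assum:bound_cov} is necessary, since a lower bound on the spectral norm does not control $\|\Sigma^{-1}\|$ or $\|\hat\Sigma_D^{-1}\|$. With that reading, the argument actually proves that, with probability at least $1-\delta$,
\[
|\hat e_{D+1}-e_{D+1}|\le\frac{\delta_D^2+2\sqrt{c_4\omega}\,\delta_D}{\lambda}+\frac{c_4\omega}{\lambda^2}\sqrt{\frac{C_1A(D,\tau)}{\delta}} .
\]
The lemma therefore holds only after redefining $G(D,\tau)^2$ as this right-hand side. Lemma \ref{Lemma:noiseBound} needs the same two repairs, with Cauchy--Schwarz handling the summed cross term.

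This is not cosmetic. With $A(D,\tau)=O(D^{-1})$, the corrected $G(D,\tau)$ is of order $\delta_D^{1/2}+D^{-1/4}$ instead of $\delta_D+D^{-1/2}$. The $G$-terms, not the $D^{-1/3}$ mixing term, would then govern the rate in Theorem \ref{Thm:coverage}.
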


\begin{proof}
\begin{equation}
\begin{aligned}
	|\hat{e}_{D+1}-e_{D+1} |= &\left|\hvess{\varepsilon}{D+1}{T}\hat{\Sigma}^{-1}_D\hvesub{\varepsilon}{D+1} - \vess{\varepsilon}{D+1}{T}{\Sigma}^{-1}\vesub{\varepsilon}{D+1} \right|\\
	\leq &  \left|\hvess{\varepsilon}{D+1}{T}\hat{\Sigma}^{-1}_D\hvesub{\varepsilon}{D+1} - \vess{\varepsilon}{D+1}{T}\hat{\Sigma}^{-1}_D\vesub{\varepsilon}{D+1} \right|+ \left|\vess{\varepsilon}{D+1}{T}\hat{\Sigma}^{-1}_D\vesub{\varepsilon}{D+1} - \vess{\varepsilon}{D+1}{T}{\Sigma}^{-1}\vesub{\varepsilon}{D+1}\right| \\
	\leq & \| \hat{\Sigma}^{-1}_D \|\| \hvesub{\varepsilon}{D+1}-\vesub{\varepsilon}{D+1} \|^2+\| \vesub{\varepsilon}{D+1}\|^2\|\hat{\Sigma}^{-1}_D \| \| \Sigma^{-1} \|\|\hat{\Sigma}_D- \Sigma \| \\
	\leq & \frac{\delta_D^2}{\lambda}+ \frac{1}{\lambda^2} \| \vesub{\varepsilon}{D+1}\|^2\|\hat{\Sigma}_D- \Sigma \|.
\end{aligned}     
\end{equation}
According to Assumption \ref{Assum:weakdepen&tail} and Equation \eqref{Appen:Equ-Sigma}, we have $\| \vesub{\varepsilon}{D+1}\|^2 \leq c_4 \omega$ and with high probability $1-\delta$, \begin{equation*}
\begin{aligned}
	&|\hat{e}_{D+1}-e_{D+1} |\\
	\leq& \frac{\delta_D^2}{\lambda}+\frac{C_1 c_4\omega}{\lambda^2\delta}\left[ \Bigg( \sum_{j,k=1}^{n-n^\prime} \min(\tau^2, \sigma_{jk}^2) \Bigg)+ (n-n^\prime)^2 \min \Bigg( \frac{1}{D}, \frac{\tau^{2-\omega/2}}{D^{\omega/4}}, H(\tau) \Bigg)\right] \\
	< &\frac{\delta_D^2}{\lambda}+ \frac{C_1}{\lambda^2\delta}(c_4\omega + \sqrt{\frac{c_5 \omega}{D\delta}})\left[ \Bigg( \sum_{j,k=1}^{n-n^\prime} \min(\tau^2, \sigma_{jk}^2) \Bigg)+ (n-n^\prime)^2 \min \Bigg( \frac{1}{D}, \frac{\tau^{2-\omega/2}}{D^{\omega/4}}, H(\tau) \Bigg)\right] 
\end{aligned}
\end{equation*}

\end{proof}

\begin{lemma}\label{Lemma:empricalCDF} Under Assumption \ref{Assum:CDFLip}, \ref{Assum:bound_cov}, and \ref{Assum:weakdepen&tail}, with probability at least $1-\delta$, $$\big| \hat{F}_{D+1}(x)-\tilde{F}_{D+1}(x) \big|\leq \left(2L_{D+1}+1 \right)G(D,\tau)+2\sup_x\Big| \tilde{F}_{D+1}(x)-F_e(x)\Big|, $$ where $G(D,\tau)$ is defined in Equation \eqref{Eq:G(D,tau)}.
\end{lemma}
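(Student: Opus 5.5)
This follows the standard argument used for Lemma C.11-type results in \citet{xu2024CPMTS-ellipsoidal}: compare two empirical CDFs whose underlying scores are close on average. The comparison is split into a perturbation part, controlled by Lemma \ref{Lemma:noiseBound}, and a smoothing part, controlled by the Lipschitz property of $F_e$ in Assumption \ref{Assum:CDFLip}.

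Work on the event of probability at least $1-\delta$ from Lemma \ref{Lemma:noiseBound}, so that $\sum_{d=1}^D|\hat e_d-e_d|\le D\,G(D,\tau)^2$. Set the threshold $\eta=G(D,\tau)$ and define the index set
\[
S=\{d:|\hat e_d-e_d|\ge \eta\}.
\]
By Markov's inequality in counting form, $|S|\le D\,G(D,\tau)^2/\eta=D\,G(D,\tau)$. Hence at most a fraction $G(D,\tau)$ of calibration indices carry a large perturbation.

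Now fix $x$. Each indicator difference $|\mathbb{I}\{\hat e_d<x\}-\mathbb{I}\{e_d<x\}|$ is nonzero only in one of two cases:
\begin{itemize}
\item $d\in S$, or
\item $d\notin S$ and $e_d$ lies within $\eta$ of $x$, i.e.\ $x-\eta\le e_d< x+\eta$.
\end{itemize}
This gives
\[
|\hat F_{D+1}(x)-\tilde F_{D+1}(x)|\le \frac{|S|}{D}+\frac1D\sum_{d=1}^D\mathbb{I}\{x-\eta\le e_d<x+\eta\}.
\]
The first term is at most $G(D,\tau)$. The second term equals $\tilde F_{D+1}(x+\eta)-\tilde F_{D+1}(x-\eta)$, up to the boundary convention (strict versus weak inequalities), which can be absorbed by replacing $\eta$ with a slightly larger value and letting it shrink.

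Add and subtract $F_e$ at both endpoints. The second term is then bounded by
\[
F_e(x+\eta)-F_e(x-\eta)+2\sup_x|\tilde F_{D+1}(x)-F_e(x)|\le 2L_{D+1}\eta+2\sup_x|\tilde F_{D+1}-F_e|,
\]
using Assumption \ref{Assum:CDFLip}. Summing the two terms gives
\[
|\hat F_{D+1}(x)-\tilde F_{D+1}(x)|\le (2L_{D+1}+1)\,G(D,\tau)+2\sup_x|\tilde F_{D+1}(x)-F_e(x)|.
\]
The right-hand side does not depend on $x$, so the bound holds uniformly in $x$, which is what the main proof uses.

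The main obstacle is choosing the threshold so that both contributions are of order $G$ rather than $G^2$ and $1$. Taking $\eta=G$ balances the Markov count $G^2/\eta$ against the Lipschitz window $2L\eta$. A secondary subtlety is that Lemma \ref{Lemma:noiseBound} holds only on a probability-$(1-\delta)$ event, so the conclusion inherits the same probability qualifier. Assumptions \ref{Assum:bound_cov} and \ref{Assum:weakdepen&tail} enter only through Lemma \ref{Lemma:noiseBound}.
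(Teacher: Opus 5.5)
Your proof is correct and is essentially the paper's argument. You use the same exceptional set $\mathcal{S}=\{d:|\hat{e}_d-e_d|\ge G(D,\tau)\}$ with the counting bound $|\mathcal{S}|\le D\,G(D,\tau)$ from Lemma~\ref{Lemma:noiseBound}, and then control the window count $\tilde{F}_{D+1}(x+G)-\tilde{F}_{D+1}(x-G)$ by adding and subtracting $F_e$ and applying the Lipschitz constant $L_{D+1}$.

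Two small remarks. Your boundary hedge is unnecessary: with the strict convention $\tilde{F}_{D+1}(x)=\frac{1}{D}\sum_{d}\mathbb{I}\{e_d<x\}$ the window identity is exact, which is slightly cleaner than the paper's closed-window step. Also, invoking Lemma~\ref{Lemma:noiseBound} requires Assumption~\ref{Assum:error_sum}, which both you and the lemma statement omit.
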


\begin{proof}
Let $\mathcal{S}=\{d: |\hat{e}_d-e_d|\geq G(D,\tau) \}$. From Lemma \ref{Lemma:noiseBound}, we have $|\mathcal{S}|\cdot G(D,\tau)\leq \sum_{d=1}^D|\hat{e}_d-e_d|\leq D\cdot G(D,\tau)^2,$ implying $|\mathcal{S}|\leq D \cdot G(D,\tau)$. 

We can bound the difference between empirical CDFs using the Lipschitz property of $F_e$ as follows:
\begin{align*}
&\big| \hat{F}_{D+1}(x)-\tilde{F}_{D+1}(x) \big|\\
\leq &\frac{1}{D}\sum_{d=1}^D\big| \mathbb{I}\{ \hat{e}_d \leq x\}-\mathbb{I}\{ {e}_d \leq x\} \big|\\
\leq &\frac{1}{D}\left( |\mathcal{S}|+\sum_{d\notin \mathcal{S}}\big| \mathbb{I}\{ \hat{e}_d \leq x\}-\mathbb{I}\{ {e}_d \leq x\} \big| \right)\\
\leq &\frac{1}{D}\left( |\mathcal{S}|+\sum_{d\notin \mathcal{S}} \mathbb{I}\left\{ \big|{e}_d -x \big| \leq \big| \hat{e}_d-e_d \big|\right\} \right)\\
=&\frac{1}{D}\left( |\mathcal{S}|+\sum_{d=1}^D \mathbb{I}\left\{ \big|{e}_d -x \big| \leq G(D,\tau)\right\} \right)\\
\leq & G(D,\tau)+ \frac{1}{D}\sum_{d=1}^D\mathbb{I}\left\{ \big|{e}_d -x \big| \leq G(D,\tau)\right\}\\
\leq & G(D,\tau)+P\left(\big| e_{D+1}-x\big|\leq G(D,\tau) \right)\\
& \quad +\sup_{x}\left| \frac{1}{D}\sum_{d=1}^D\mathbb{I}\left\{ \big|{e}_d -x \big| \leq G(D,\tau)\right\}-P\left(\big| e_{D+1}-x\big|\leq G(D,\tau) \right) \right|\\
=& G(D,\tau)+ \left[ F_e\left(x+G(D,\tau)\right) -F_e\left(x-G(D,\tau)\right) \right]\\
&\quad + \sup_x\Bigg| \Big(\tilde{F}_{D+1}\left(x+G(D,\tau)\right)-\tilde{F}_{D+1}\left(x-G(D,\tau)\right)\Big)\\
&\qquad -\Big({F}_{e}\left(x+G(D,\tau)\right) -{F}_{e}\left(x-G(D,\tau)\right)\Big) \Bigg|\\
\leq & \left(2L_{D+1}+1 \right)G(D,\tau)+2\sup_x\Big| \tilde{F}_{D+1}(x)-F_e(x)\Big|.
\end{align*}
\end{proof}

\section{Additional Tables and Figures}\label{app:data-features}
This section provides supplementary tables and figures supporting the data description, feature selection, simulation study, and real-data application. Table \ref{tab:summary_var} summarizes the geomagnetic response variables and the near-Earth solar-wind and interplanetary magnetic field (IMF) predictors considered in this study. Figures \ref{fig:LagCorr_target} and \ref{fig:LagCorr} present time-lagged cross-correlations that illustrate dependence among the geomagnetic indices and motivate a joint model and the selection of exogenous solar-wind inputs. Additional simulation results include a representative realization of the underlying stochastic process in Figure \ref{fig:OneSimu} and the full numerical results for $\tilde{D}=150$ and $\tilde{D}=300$ in Tables \ref{Appen:tab:sim_results_D150} and \ref{Appen:tab:sim_results_D300}. Table \ref{tab:res_24hr} reports supplementary out-of-sample forecasting results for the 24-hour-ahead prediction task across the five geomagnetic indices and the three competing methods.

\begin{table}[hbpt]
\centering
\caption{Summary of geomagnetic response variables and solar-wind predictor variables used in the forecasting analysis.}
\label{tab:summary_var}
\begin{tabular}{cccccc}
\hline
& Variable & Cadence & Range/Unit & Coverage & Source\\
\hline
\multicolumn{6}{l}{\textit{Responses:}} \\
& Kp    & 3 hr  & 0--9, thirds             & 1932--present & GFZ\\ 
& Dst   & 1 hr  & nT                        & 1957--present & WDC Kyoto\\ 
& SYM-H & 1 min (resampled to 5 min) & nT              & 1981--present & WDC Kyoto\\ 
& SME   & 1 min (resampled to 5 min) & $\geq 0$ nT     & 1975--2025 & SuperMAG\\ 
& SMR   & 1 min (resampled to 5 min) & nT              & 1972--2025 & SuperMAG\\ 
\hline
\multicolumn{6}{l}{\textit{Predictors}}\\
& B     & 1 min (resampled to 1 hr) & $\geq 0$ nT         & 1981--present & OMNI\\ 
& Bz    & 1 min (resampled to 1 hr) & nT                  & 1981--present & OMNI\\
& Vs    & 1 min (resampled to 1 hr) & $\geq 0$ km/s       & 1981--present & OMNI\\ 
& Pt    & 1 min (resampled to 1 hr) & $\geq 0$ K          & 1981--present & OMNI\\ 
& Fp    & 1 min (resampled to 1 hr) & $\geq 0$ nPa        & 1981--present & OMNI\\ 
& Ef    & 1 min (resampled to 1 hr) & mV/m                & 1981--present & OMNI\\ 
\hline
\end{tabular}
\end{table}

\begin{figure}[hbpt]
\centering
\includegraphics[width=1\textwidth]{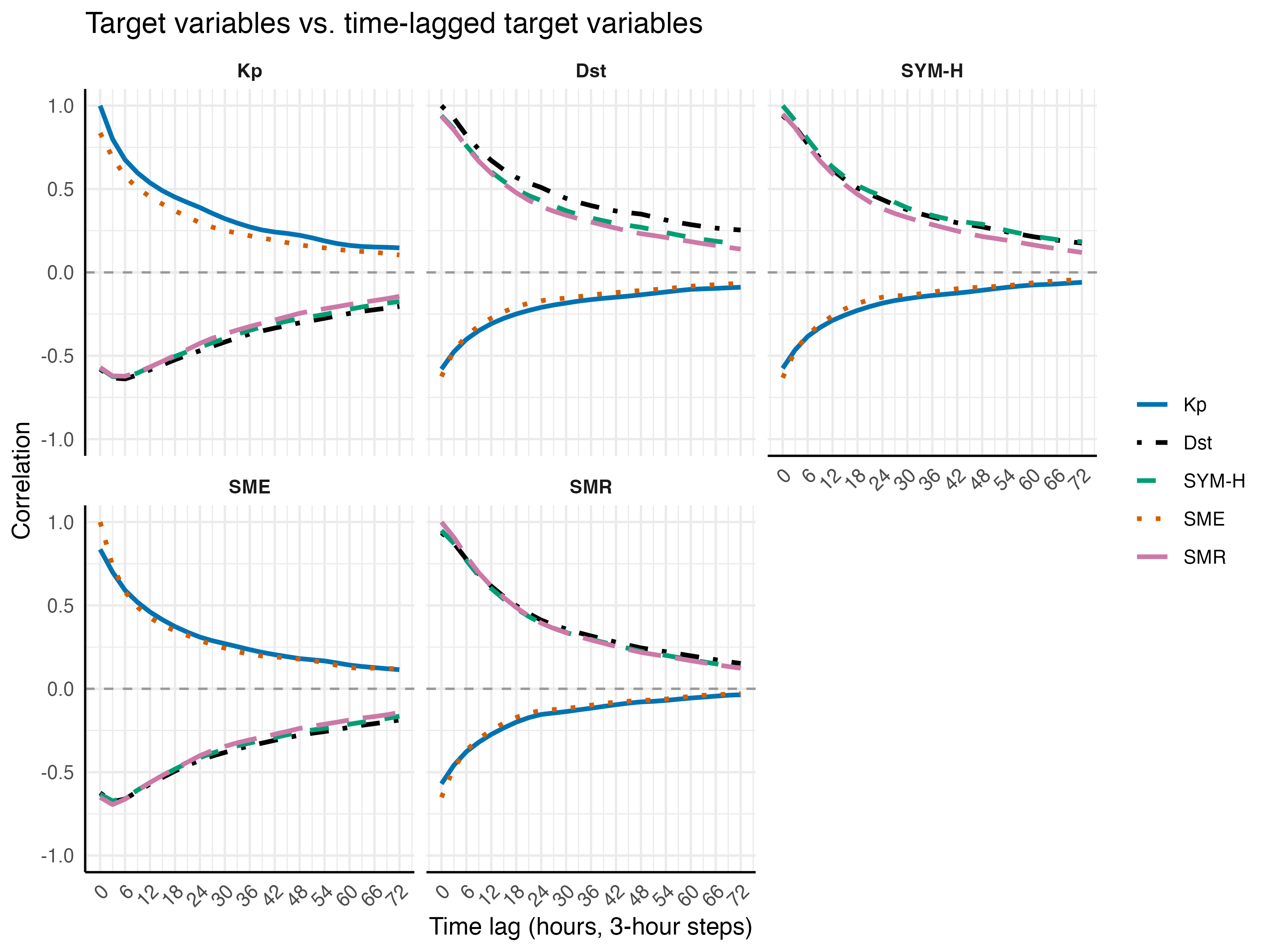}
\caption{Time-lagged cross-correlations over lags from 0 to 72 hours among the geomagnetic indices. Each panel corresponds to one lagged geomagnetic index, and the curves show its correlation with the contemporaneous values of Kp, Dst, SYM-H, SME, SMR at 3-hour lag intervals.}
\label{fig:LagCorr_target}
\end{figure}

\begin{figure}[hbpt]
\centering
\includegraphics[width=1\textwidth]{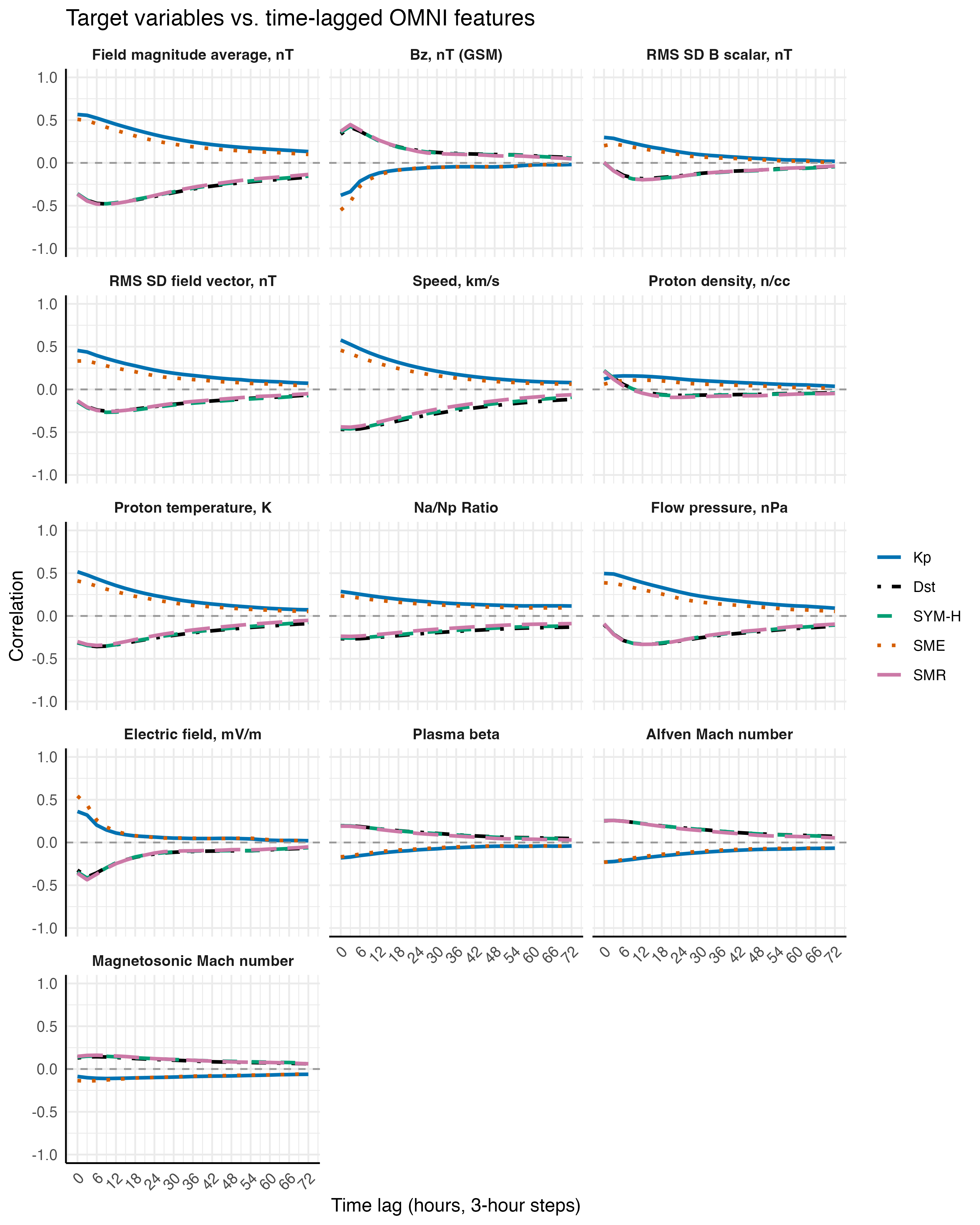}
\caption{Time-lagged cross-correlations over lags from 0 to 72 hours between the geomagnetic indices and candidate OMNI solar-wind variables. The observed lagged associations motivated the selection of B (IMF magnitude), Bz (southward IMF component), Vs (solar-wind speed), Pt (proton temperature), Fp (flow pressure), and Ef (motional electric field) as predictors.}
\label{fig:LagCorr}
\end{figure}

\begin{figure}[hbpt]
\centering
\includegraphics[width=1\textwidth]{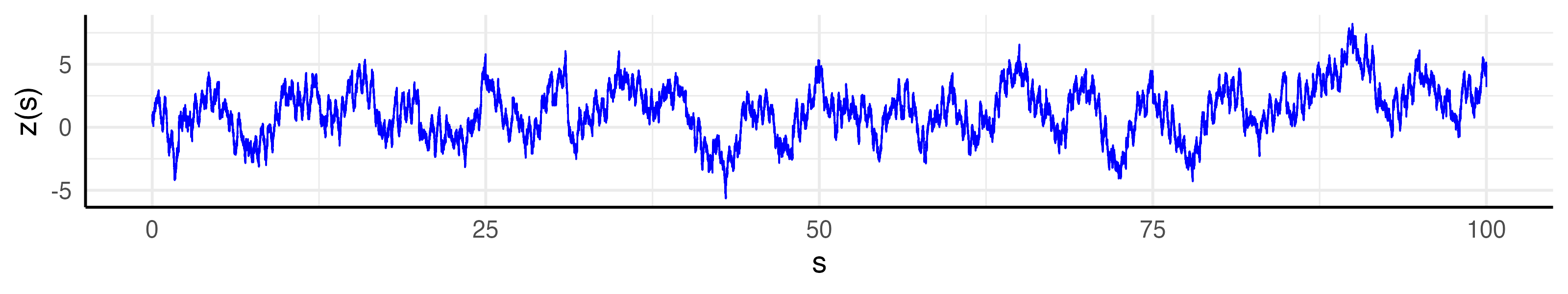}
\caption{Example trajectory of the underlying process $z(s)$ over the global time domain.}
\label{fig:OneSimu}
\end{figure}

\begin{table}[hbpt]
\centering
\caption{Simulation results for $\tilde{D}=150$ comparing the proposed OLFTSA method under varying overlap ratios ($\gamma$) against FSSA, in terms of point forecast accuracy, uncertainty quantification (nominal coverage level 0.95), and computational cost (in minutes). Results are averaged over 10 replications.}
\label{Appen:tab:sim_results_D150}
{%
\begin{tabular}{llccccc}
	\hline
	{$n$} & {Method}  & RMSE & MAE & ECP & MIW & Time  \\
	\hline
	\multirow{5}{*}{100} 
	&OLFTSA ($\gamma=0.0$) & 0.9360 & 0.7363& 0.9367 & 3.6303 & 0.0211  \\
	& OLFTSA ($\gamma=0.4$)&  0.6133 & 0.4620 & 0.9772 & 2.9030 & 0.0357\\
	& OLFTSA ($\gamma=0.6$) & 0.6471 & 0.4968 & 0.9718 & 3.1177 & 0.0674 \\
	& OLFTSA ($\gamma=0.8$) & 0.4104 & 0.3063 & 0.9671 & 2.3841 & 0.2561 \\
	& FSSA-Bootstrap & 0.6366 & 0.4924 & 0.3800 & 2.3178 & 1.2164 \\
	\hline
	\multirow{5}{*}{300} 
	&OLFTSA ($\gamma=0.0$) & 1.0235 & 0.8002& 0.9672 & 4.6424 & 0.0906  \\
	& OLFTSA ($\gamma=0.4$)&  0.7623 & 0.5810 & 0.9683 & 3.7853 & 0.0975\\
	& OLFTSA ($\gamma=0.6$) & 0.7852 & 0.6079 & 0.9657 & 3.4514 & 0.1233 \\
	& OLFTSA ($\gamma=0.8$) & 0.5681 & 0.4334 & 0.9665 & 3.0029 & 0.2935 \\
	& FSSA-Bootstrap & 0.9431 & 0.7343 & 0.5218 & 3.4906 & 2.3660 \\
	\hline
	\multirow{5}{*}{500} 
	&OLFTSA ($\gamma=0.0$) & 1.1871 & 0.9300 & 0.9755 & 5.9142 & 0.2971  \\
	& OLFTSA ($\gamma=0.4$)&  0.8789 & 0.6735 & 0.9825 & 4.3567 & 0.2637\\
	& OLFTSA ($\gamma=0.6$) & 0.8607 & 0.6606 & 0.9629 & 3.8488 & 0.2838  \\
	& OLFTSA ($\gamma=0.8$) & 0.6188 & 0.4718 & 0.9718 & 3.2077 & 0.4695 \\
	& FSSA-Bootstrap & 1.1844 & 0.9229 & 0.5233 & 4.3782 & 3.9715  \\
	\hline
	
\end{tabular}%
}
\end{table}

\begin{table}[hbpt]
\centering
\caption{Simulation results for $\tilde{D}=300$ comparing the proposed OLFTSA method under varying overlap ratios ($\gamma$) against FSSA, in terms of point forecast accuracy, uncertainty quantification (nominal coverage level 0.95), and computational cost (in minutes). Results are averaged over 10 replications.}
\label{Appen:tab:sim_results_D300}
{%
\begin{tabular}{llccccc}
	\hline
	{$n$} & {Method}  & RMSE & MAE & ECP & MIW & Time  \\
	\hline
	\multirow{5}{*}{100} 
	&OLFTSA ($\gamma=0.0$) & 0.7924 & 0.6214& 0.9689 & 3.6233 & 0.0455  \\
	& OLFTSA ($\gamma=0.4$)&  0.5214 & 0.3990 & 0.9567 & 2.2372 & 0.1002\\
	& OLFTSA ($\gamma=0.6$) & 0.5274 & 0.3970 & 0.9737 & 2.9314 & 0.2111 \\
	& OLFTSA ($\gamma=0.8$) & 0.3869 & 0.2734 & 0.9624 & 2.2238 & 0.9433 \\
	& FSSA-Bootstrap & 0.6124 & 0.4835 & 0.4134 & 2.2003 & 12.6185 \\
	\hline
	\multirow{5}{*}{300} 
	&OLFTSA ($\gamma=0.0$) & 1.0777 & 0.8487& 0.9626 & 4.6142 & 0.1373  \\
	& OLFTSA ($\gamma=0.4$)&  0.7551 & 0.5745 & 0.9564 & 3.2074 & 0.1792\\
	& OLFTSA ($\gamma=0.6$) & 0.6633 & 0.5043 & 0.9695 & 3.3165 & 0.2949 \\
	& OLFTSA ($\gamma=0.8$) & 0.4742 & 0.3468 & 0.9617 & 2.6597 & 1.0961 \\
	& FSSA-Bootstrap & 0.9824 & 0.7741 & 0.5121 & 3.6702 & 16.4341 \\
	\hline
	\multirow{5}{*}{500} 
	&OLFTSA ($\gamma=0.0$) & 1.2297 & 0.9699 & 0.9597 & 5.2539 & 0.3398  \\
	& OLFTSA ($\gamma=0.4$)&  0.8895 & 0.6832 & 0.9557 & 3.7466 & 0.3425\\
	& OLFTSA ($\gamma=0.6$) & 0.7568 & 0.5801 & 0.9690 & 3.6403 & 0.4520  \\
	& OLFTSA ($\gamma=0.8$) & 0.5400 & 0.4023 & 0.9638 & 2.8967 & 1.2209 \\
	& FSSA-Bootstrap & 1.3292 & 1.0616 & 0.5897 & 4.8781 & 22.8012  \\
	\hline
	
\end{tabular}%
}
\end{table}

\begin{table}[hbpt!]
\centering
\caption{Performance comparison of 24-hour lead time prediction.}
\label{tab:res_24hr}
\resizebox{\columnwidth}{!}{%
\begin{tabular}{lccccccccc}
	\toprule
	\multirow{2}{*}{Index} & \multicolumn{3}{c}{OLFTSA} & \multicolumn{3}{c}{LSTM} & \multicolumn{3}{c}{CHRONOS} \\
	\cmidrule{2-10} 
	& RMSE & MAE & R  & RMSE & MAE & R& RMSE & MAE & R \\ 
	\midrule
	Kp & 1.0520 & 0.8027 & 0.5899 & 1.1505 & 0.9082 & 0.4630 & 1.1682 & 0.8729 & 0.4909 \\
	Dst & 13.0899 & 8.0703 & 0.7555 & 15.9709 & 10.0122 & 0.6247 & 15.0583 & 9.0356 & 0.6629 \\
	{SYM-H} & 13.4627 &	8.2747 &0.7538 & 17.7221 &	11.2564 & 0.6147 & 16.4369 & 9.6024	& 0.6120 \\
	{SMR} & 13.5061 & 8.0424 & 0.7126 & 16.4162 & 9.8411 & 0.5713 & 16.2052 & 8.8930 &	0.5631 \\
	{SME} & 196.2860 & 133.4225 &	0.4235 & 209.2637 &152.1024 & 0.2849 & 217.5409& 134.2652 &0.3201 \\
	\bottomrule
\end{tabular}%
}
\end{table}

\end{document}